\newif\ifdraft \drafttrue \draftfalse
\documentclass[a4paper, 12pt, twoside]{article}
\usepackage[utf8]{inputenc}
\usepackage{tgpagella}
\usepackage{eulervm}
\usepackage[OT1]{fontenc}
\usepackage[toc]{appendix}
\usepackage[%
  ,sorting=nyt%
  ,maxcitenames=3%
  ,mincitenames=2%
  ,minalphanames=2%
  ,maxbibnames=100%
  ,backend=bibtex%
  ,style=alphabetic%
]{biblatex}

\bibliography{bibliography.bib}

\usepackage{amsmath, amsfonts, mathtools, amsthm, amssymb, stmaryrd, etoolbox, xargs}
\usepackage{calc}
\usepackage{stmaryrd}
\usepackage[head=6mm,margin=24mm, foot=16mm, top=30mm, bottom=36mm]{geometry}
\usepackage{xcolor}
\usepackage{graphicx}
\usepackage{subfig}
\usepackage[blocks]{authblk}
\setlength{\affilsep}{0em}
\captionsetup[figure]{labelfont=bf}
\usepackage{pdflscape}
\usepackage{afterpage}
\usepackage{caption}
\usepackage{booktabs, tabularx, colortbl}
\definecolor{tabularrowcolor}{gray}{0.9}
\definecolor{tabularrowtitle}{gray}{0.85}

\usepackage[inline]{enumitem}
\usepackage{etoolbox}
\setlist{parsep=0pt, itemsep={0.2\thmspace}, topsep={0.3\thmspace}}
\setlist[enumerate]{label=\enumstyle{\alph{*}}}
\newlist{enuminline}{enumerate*}{2}
\setlist[enuminline]{label={\hspace{.5ex plus .3ex minus .2ex}\rm\arabic{*})},itemjoin*={{ }and{ }}}
\setlist[trivlist]{parsep=0pt,partopsep=0pt,topsep=\thmspace}
\newlist{subthm}{enumerate}{3}
\setlist[subthm]{label=\enumstyle{\alph{*}},ref=\thethm\hspace*{1pt}\enumstyle{\alph{*}}}

\newcommand{\unchecked}{$\square$}

\newlist{checklist}{itemize}{5}
\setlist[checklist]{label=\unchecked}

\newcommand{\enumstyle}[1]{{\rm(#1)}}
\newcounter{subthmfalsecounter}

\usepackage{svg,hyperref}

\usepackage{fancyhdr}
\pagestyle{fancy}
\fancyhead{}

\ifdraft
  \fancyfoot[L]{\isotoday}
  \fancyhead[C]{\ifthenelse{\equal{\rightmark}{}}{\nouppercase\leftmark}{\rightmark}}
  \fancyfoot[R]{DRAFT}
\fi

\ifdraft
  \usepackage[mathlines,right]{lineno}
  \linenumbers
  \usepackage[notref,notcite]{showkeys}
  \renewcommand*\showkeyslabelformat[1]{\scriptsize\normalfont\ttfamily #1}
\else
  \newcommand*\showkeyslabelformat[1]{}
\fi

\usepackage{tikz}
\usetikzlibrary{calc, arrows.meta, shapes}

\apptocmd\normalsize{%
 \abovedisplayskip=6pt plus 1pt minus 0.5pt
 \abovedisplayshortskip=0pt plus 2pt
 \belowdisplayskip=6pt plus 1pt minus 0.5pt
 \belowdisplayshortskip=0pt plus 2pt
}{}{}

\DeclareRobustCommand\sbseries{\fontseries{sb}\selectfont}
\newcommand{\defemph}[1]{\textcolor{blue!50!black}{\sbseries#1}}

\newcommand{\subwalkrel}[1]{%
    \ifstrequal{#1}{<}{\mathrel{\sqsubset}}{%
    \ifstrequal{#1}{>}{\mathrel{\sqsupset}}{%
    \ifdefequal{#1}{\leq}{\mathrel{\sqsubseteq}}{%
    \ifdefequal{#1}{\geq}{\mathrel{\sqsupseteq}}{%
        \subwalkrel\leq#1%
    }}}}%
}%

\newcommand{\subwordrel}[1]{%
    \ifstrequal{#1}{<}{\mathrel{\vartriangleleft}}{%
    \ifstrequal{#1}{>}{\mathrel{\vartriangleright}}{%
    \ifdefequal{#1}{\leq}{\mathrel{\trianglelefteq}}{%
    \ifdefequal{#1}{\geq}{\mathrel{\trianglerighteq}}{%
        \subwordrel\leq#1%
    }}}}%
}%

\newcommand{\mm}{\minmultsem}
\newcommand{\sms}{\semfont{SMS}}

\newcommand{\semfont}[1]{{\normalfont\textsf{#1}}}

\newcommand{\minmultsem}{\semfont{MM}}

\newcommandx{\matchspace}[2][1=R,2=D]{\textsc{Match}(#1,#2)}

\newcommand{\walks}{\mathsf{Walks}}
\newcommand{\edges}{\mathsf{EdgeBag}}
\newcommand{\rededges}{\mathsf{RedEdgeBag}}
\newcommand{\edgeset}{\mathsf{EdgeSet}}
\newcommand{\powerset}{\mathcal{P}}

\DeclareMathOperator{\len}{\mathsf{Len}}
\DeclareMathOperator{\src}{\mathsf{Src}}
\DeclareMathOperator{\tgt}{\mathsf{Tgt}}

\DeclareMathOperator{\lbl}{\mathsf{Lbl}}
\DeclareMathOperator{\matches}{\mathsf{Matches}}

\NewDocumentCommand{\intint}{d[] d(] d[) d()}{\IfNoValue{#1}{\llbracket#1\rrbracket}}

\newcommand{\arobase}{\raisebox{-0.2ex}{\fontfamily{ptm}\selectfont @}}
\newcommand{\mel}[2]{{\tt\href{mailto:#1@#2}{#1\arobase{}#2}}}

\makeatletter
\newcommand{\vm@date@separator}{\hspace*{0.15ex}\rule[0.4\vm@date@height]{1ex}{0.07\vm@date@height}\hspace*{0.15ex}}
\newcommand{\vmdatefont}[1]{#1}
\newcommand{\isotoday}{%
  \vmdatefont{
    \newdimen\vm@date@height%
    \setbox0=\hbox{0123456789}%
    \vm@date@height=\ht0 \advance\vm@date@height by -\dp0
    \the\year\vm@date@separator\two@digits{\month}\vm@date@separator\two@digits{\day}%
  }%
}

\newcommand{\orcidicon}[1][1.5ex]{\adjustbox{max height=#1, raise=\dimexpr ( #1 * 195 / 700 * -1)  \relax}{\includesvg{aux/orcid}}}

\newcommand{\orcidfull}[1]{\href{#1}{\orcidicon[1.1em]~\texttt{#1}}}

\newcommand{\Gc}{\mathcal{G}}

\makeatletter
\newcommand{\set}[2][]{\{#1{\kern1\nulldelimiterspace}#2{#1\kern1\nulldelimiterspace\}}}
\newcommand{\setst}{\@ifstar{\autosetst}{\paramsetst}}
\newcommand{\autosetst}[2]{\left\{\,#1\,\middle|\,#2\,\right\}}
\newcommand{\paramsetst}[3][]{#1\{\,#2\mathbin{#1|}#3{\,#1\}}}

\newcommand{\bag}{\@ifstar{\autobag}{\parambag}}
\newcommand{\autobag}[1]{\llbrace*#1\rrbrace*}
\newcommand{\parambag}[2][]{\llbrace[#1]#2\rrbrace[#1]}
\newcommand{\bagst}{\@ifstar{\autobagst}{\parambagst}}
\newcommand{\autobagst}[2]{\llbrace*\,#1\,\middle|\,#2\,\rrbrace*}
\newcommand{\parambagst}[3][]{\llbrace[#1]\,#2\mathbin{#1|}#3\,\rrbrace[#1]}

\newcommand{\llbrace}{\@ifstar{\leftllbrace}{\paramllbrace}}
\newcommand{\paramllbrace}[1][]{{#1\{\hspace*{-.25em}#1\{}}
\newcommand{\leftllbrace}{\left\lbrace\kern-3\nulldelimiterspace\middle\lbrace}
\newcommand{\rrbrace}{\@ifstar{\rightrrbrace}{\paramrrbrace}}
\newcommand{\paramrrbrace}[1][]{{#1\}}\hspace*{-.25em}{#1\}}}
\newcommand{\rightrrbrace}{\middle\rbrace\kern-3\nulldelimiterspace\right\rbrace}
\makeatother

\newcommand{\np}{\textbf{NP}}
\newcommand{\conp}{\textbf{coNP}}
\newcommand{\delayp}{\textbf{PolyDelay}}
\newcommand{\ptime}{\textbf{P}}
\newcommand{\nl}{\textbf{NL}}

\usepackage[framemethod=TikZ]{mdframed}

\usepackage{xpatch}
\makeatletter
\xpatchcmd{\endmdframed}
  {\aftergroup\endmdf@trivlist\color@endgroup}
  {\endmdf@trivlist\color@endgroup\@doendpe}
  {}{}
\makeatother

\newlength{\problemmargin}
\setlength{\problemmargin}{\parindent*2} 
\newlength{\problempadding}
\setlength{\problempadding}{1ex}
\newlength{\problemsep}
\setlength{\problemsep}{.25ex}
\newlength{\problemtitlepadding}
\setlength{\problemtitlepadding}{1ex}
\newcommand{\problembullet}{\textbullet~~}
\newcommand{\problemfont}[1]{\textsc{#1}}
\makeatletter
\newenvironment{problem}[1]{%
    \begin{mdframed}[%
        frametitle={\hspace*{\dimexpr\problemtitlepadding-\problempadding\relax}#1},
        frametitlefont=\large\sc,
        innerleftmargin=\problempadding,
        innerrightmargin=\problempadding,
        innerbottommargin=\problempadding,
        innertopmargin=\problempadding,
        frametitleaboveskip=\problemtitlepadding,
        frametitlebelowskip=\problemtitlepadding,
        usetwoside=false,
        leftmargin=\problemmargin,
        rightmargin=\problemmargin,
        skipabove={.75\baselineskip\@plus.2\baselineskip\@minus.2\baselineskip},
        skipbelow={.5\baselineskip\@plus.2\baselineskip\@minus.2\baselineskip},
        frametitlerule=true,
        frametitlebackgroundcolor=black!10,
        nobreak,
        roundcorner=0pt,
        linewidth=0.5pt,
        frametitlerulewidth=0.5pt,
        ]%
    \description[topsep=0pt,itemsep=\problemsep,labelindent=0cm,leftmargin=\widthof{\problembullet},font=~\textbullet~\normalfont\bf]%
}{%
    \enddescription\end{mdframed}%
}
\makeatother
\newlength{\thmspace}\setlength{\thmspace}{1.5ex plus 0.3ex}
\newtheoremstyle{vmstyle}%
    {\thmspace} 
    {\thmspace} 
    {\slshape\renewcommand{\emph}[1]{\defemph{#1}}} 
    {} 
    {\bfseries} 
    {.} 
    {2ex} 
    {} 
\theoremstyle{vmstyle}

\newcommand{\thmBlockFont}[1]{#1}

\newcounter{thm}

\newtheorem{corollary}[thm]{\thmBlockFont{Corollary}}
\newtheorem{conjecture}[thm]{\thmBlockFont{Conjecture}}
\newtheorem{definition}[thm]{\thmBlockFont{Definition}}

\newtheorem{lemma}[thm]{\thmBlockFont{Lemma}}
\newtheorem{proposition}[thm]{\thmBlockFont{Proposition}}

\newtheorem{remark}[thm]{\thmBlockFont{Remark}}

\newtheorem{theorem}[thm]{\thmBlockFont{Theorem}}


\newtheorem{falsepropositionX}{\thmBlockFont{Proposition}}

\newtheorem{falsetheoremX}{\thmBlockFont{Theorem}}

\newtheorem{falsecorollaryX}{\thmBlockFont{Corollary}}

\newtheorem{falselemmaX}{\thmBlockFont{Lemma}}

\newtheorem*{falsestatementX}{\thmBlockFont{\thestatement}}

\newlength{\minnoderadius}
\setlength{\minnoderadius}{20pt}
\newlength{\shortenlength}
\setlength{\shortenlength}{2pt}
\newlength{\nodelinewidth}
\setlength{\nodelinewidth}{1pt}
\newlength{\arrowwidth}
\setlength{\arrowwidth}{3pt}

\def\loopangle{24}
\tikzset{%
    bend angle=20,
    >={Stealth[width=6pt,length=6pt]},
    node/.style={circle, line width=\nodelinewidth, draw, black, inner sep=2pt, outer sep=.5*\nodelinewidth, minimum height=\minnoderadius, minimum width=\minnoderadius},
    vertex/.style={fill=black,inner sep=1.5pt, circle},
    state/.style={node},  
    run state/.style={draw,rounded rectangle},
    preedge/.style={-, draw, black, line width=1pt, rounded corners=5pt,pos=.4, shorten >=\shortenlength},
    edge/.style={preedge,->},
    redge/.style={preedge,<-},
    initialedge/.style={edge, shorten > =0pt, shorten < =0pt},
    finaledge/.style={redge, shorten > =0pt, shorten < =0pt},
    borderedge/.style={edge, -, color=white, line width=5pt, shorten >=\shortenlength-2pt, >={Stealth[width=12pt,length=12pt]}},
    vm loop/.style={edge, pos=.5, looseness = 8},    
    graph loop/.style={edge, pos=.5, looseness = 40, shorten >=4pt},    
    road/.style={edge, color=fred},
    hl/.style={edge, line width = 1.5mm, color=borange, shorten >=1pt},
    ferry/.style={edge,color=fpurple},
    gas/.style={edge,color=fblue},
    start/.style={edge,color=fblue},
    end/.style={edge,color=fblue},
    >=stealth,
    north west loop/.style={vm loop, in={\the\numexpr 135 + \loopangle\relax}, 
                                     out ={\the\numexpr 135 - \loopangle\relax}},
    north east loop/.style={vm loop, in={\the\numexpr 45 + \loopangle\relax}, 
                                     out ={\the\numexpr 45 - \loopangle\relax}},
    south west loop/.style={vm loop, in={\the\numexpr -135 + \loopangle\relax}, 
                                     out ={\the\numexpr -135 - \loopangle\relax}},
    south east loop/.style={vm loop, in={\the\numexpr -45 + \loopangle\relax}, 
                                     out ={\the\numexpr -45 - \loopangle\relax}},
    north loop/.style={vm loop, in={\the\numexpr 90 + \loopangle\relax}, 
                                out ={\the\numexpr 90 - \loopangle\relax}},
    south loop/.style={vm loop, in={\the\numexpr 270 - \loopangle\relax}, 
                                out ={\the\numexpr 270 + \loopangle\relax}},
    east loop/.style={vm loop, in={\the\numexpr 0 + \loopangle\relax}, 
                                out ={\the\numexpr 0 - \loopangle\relax}},
    west loop/.style={vm loop, in={\the\numexpr 180 - \loopangle\relax}, 
                                out ={\the\numexpr 180 + \loopangle\relax}},                
    node distance = \nodedist,
}

\tikzset{
        state/.style={draw,rectangle, rounded corners},
        out state/.style={state, dashed},
        transition/.style={draw,-stealth,very thick, shorten >=2pt},
        out transition/.style={transition, dashed},
        p1/.style={rouge},
        p2/.style={bleu},
        p3/.style={vert},
        gadget/.style={draw},
        }

\newlength{\nodedist}
\setlength{\nodedist}{24mm}

\newlength{\initfinaldist}
\setlength{\initfinaldist}{6mm}



\newcommand{\initialfinal}[2][0]{%
    \def\angleI{\the\numexpr #1 + 15 \relax}
    \def\angleII{\the\numexpr #1 - 15 \relax}
    \path (#2.\angleI) ++(#1:\initfinaldist) coordinate 
        (#2-initialfinal1-#1);
    \path[initialedge] (#2.\angleI) to         
        (#2-initialfinal1-#1);
    \path (#2.\angleII) ++(#1:\initfinaldist) coordinate     (#2-initialfinal2-#1);
    \path[finaledge] (#2.\angleII) to
      (#2-initialfinal2-#1);
}
\definecolor{vert}{rgb}{0,.55,0.20}
\definecolor{bleu}{rgb}{0,0,0.75}
\definecolor{rouge}{rgb}{.75,0,0}

\usepackage{multicol}
\setlength{\columnsep}{1cm}
\setlength{\multicolsep}{6.0pt plus 2.0pt minus 1.5pt}

\DeclareMathSymbol{\in}{\mathbin}{symbols}{"32}
\newcommand{\noneqspacing}{
  \thickmuskip=5mu plus 3mu minus 2mu
  \medmuskip=3mu plus 2mu minus 2mu
}
\noneqspacing
\newcommand{\eqspacing}{
  \thickmuskip=12mu plus 3mu minus 3mu
  \medmuskip=5mu plus 3mu minus 2mu
}%
\AtBeginEnvironment{gather*}{\eqspacing}%
\AtBeginEnvironment{gather}{\eqspacing}%
\AtBeginEnvironment{equation}{\eqspacing}%
\AtBeginEnvironment{equation*}{\eqspacing}%
\AtBeginEnvironment{align}{\eqspacing}%
\AtBeginEnvironment{align*}{\eqspacing}%
\AtBeginEnvironment{multline}{\eqspacing}%
\AtBeginEnvironment{multline*}{\eqspacing}%

\usepackage{lipsum}

\newcommand{\thesubtitle}{%
}
\newcommand{\thetitle}{Matching walks that are minimal with respect to edge inclusion}

\title{\thetitle\ifdefempty{\thesubtitle}{}{\\[.25em] \Large \thesubtitle}}

\usepackage{adjustbox}

\author{Victor Marsault}
\affil{LIGM, Université Gustave Eiffel, CNRS\\
       \orcidfull{https://orcid.org/0000-0002-2325-6004}\\
       \mel{victor.marsault}{univ-eiffel.fr}
       }
       
\date{\isotoday}

\fancyhead{}
\fancyhead[EL]{\thetitle}
\fancyhead[OR]{V.~Marsault}
\ifdraft
  \fancyfoot[L]{\isotoday}
  \fancyfoot[R]{DRAFT}
\fi

\begin{document}

\maketitle

\begin{abstract}
   In this paper we show that enumerating the set~$MM(G,R)$, defined below, cannot be done with polynomial-delay in its input~$G$ and~$R$, unless $P=NP$.
   $R$ is a regular expression over an alphabet~$\Sigma$, $G$ is directed graph labeled over~$\Sigma$, and~$MM(G,R)$ contains walks of~$G$.
   First, consider the set~$Match(G,R)$ containing all walks~$G$ labeled by a word (over $\Sigma$) that conforms to~$R$.
   In general,~$M(G,R)$ is infinite, and $MM(G,R)$ is the finite subset of~$Match(G,R)$ of the walks that are minimal according to a well-quasi-order~$\preceq$.
   It holds $w\prec w'$ if the multiset of edges appearing in $w$ is strictly included in the multiset of edges appearing in~$w'$.
   Remarkably, the set~$MM(G,R)$ always contains some walks that may be computed in polynomial time.  
   Hence, it is not the case that the preprocessing phase of any algorithm enumerating $MM(G,R)$ must solve an NP-hard problem.
\end{abstract}

\medskip

\small%
\paragraph{Keywords:} Enumeration complexity, Polynomial-delay, Directed Graphs, Pattern matching, Regular expressions, Regular Path Queries.

\normalsize \bigskip

\section{Introduction}

Finding walks that match regular expressions is a crucial part of the modern query languages, especially
for property graph database management systems.  Indeed, Regular Path Queries (RPQs), which were introduced a few decades ago \cite{CruzMendelzonWood1987,ConsensMendelzon1990}, are now the core of the languages Cypher \cite{FrancisEtAl2018-sigmod,FrancisEtAl2018-arxiv}, GSQL~\cite{DeutschEtAl2019}, PGQL~\cite{PGQL1.4} and the recent ISO standard language GQL \cite{DeutschEtAl2022,GQL-ISO}.
They are also used in SQL as part as the PGQ recent extension \cite{DeutschEtAl2022,SQLPGQ-ISO} and in the RDF language SPARQL \cite{SPARQL1.1PP}.

\clearpage

An RPQ matches the walks conforming to an regular expression, and RPQs are well-behaved in the case where one is interested in the existence of a walk matching the regular expression.  However, users seems to need the matching walks in full, as indicated by the semantics of most aforementioned languages.
Since there are sometimes infinitely many matching walks, real query languages choose an \emph{RPQ semantics}, that is, a filter that selects a finite number of "good" matching walks to output to the user.
It turns out that none of the RPQ semantics currently used are ideal (see \cite[Intro.]{DavidFrancisMarsault2023} or \cite{MarsaultMeyer2024}), which gives some room
to look for new candidate RPQ semantics.  For instance, \cite{DavidFrancisMarsault2023} introduces and studies a new kind of \emph{Run-based} RPQ semantics.  In another paper, we describe a framework to compare RPQ semantics \cite{MarsaultMeyer2024}.
We also define quite a few candidate semantics there. Two of them are studied in \cite{Khichane2024} and the present note is about another one.

Namely, we study the computational property of Minimal-Multiset (\mm) semantics.
It is defined as follows:  $\mm$ keeps a matching walk~$w$ in the output if there is no other matching walk~$w'$ such that the edges of~$w$ are strictly included in the edges of~$w$ (here \emph{inclusion} stands for \emph{bag inclusion}, hence multiplicity matters).
We show that the problem of enumerating all matching walks kept by \mm{} is not \delayp{} unless $\ptime=\np$ (Theorem~\ref{th:enum-hardness}).
Moreover, deciding whether \mm{} keeps a given walk is \conp-complete (Theorem~\ref{th:memb-conp-compl-mm}).
Both result are in data complexity (see Remark~\ref{r:data-complexity} or \cite{AbiteboulHullVianu1995}).

It is also noteworthy that the aforementioned enumeration problem exhibits some unusual features.
Usually, proving that an enumeration problem is not in \delayp{} unless $\ptime\neq\np$
amounts to showing that one must solve an \np-hard problem before outputting the \emph{first} solution.
Typically, for every instance~$I$ of some \np-hard problem, one builds an instance for the enumeration problem for which the outputs are certificates for $I$. We refer to these as \emph{hard} outputs in the following.
In our case, this approach cannot work. Indeed, we will see that every instance of our problem always has \emph{easy} solutions, in the sense that they may be found in polynomial time.
These easy solutions must remain in polynomial numbers.
Moreover, the condition for a candidate hard solution to be actually output highly depends on the easy solutions.
See Remark~\ref{r:easy-walks} for details.
In some sense, the usual approaches shows that problems are \np-hard in preprocessing time, while our problem seems to be of lesser complexity.


Our proof is built upon a folklore reduction used to show that the problem \problemfont{$k$-Edge-Disjoint-Paths} is \np-hard.
This reduction can be adapted to to show that evaluation of RPQs under trail semantics evaluation is \np-hard. 
We even reshow that latter result as a first step of our proof: it corresponds roughly to the \emph{red} part of the graph and statements up to Proposition \ref{p:3sat<=>trail}.
Note that it also classical that the problem \problemfont{Two-Edge-Disjoint-Path} is \np-hard (see for instance \cite[Sec.~10.2]{BangJensenGutin2009}); however, the proof being more involved, it is harder to adapt to our setting.

\paragraph{Outline}
After Preliminaries, Section \ref{s:problem} precisely defines the Minimal-Multiset semantics
and the problem we study.
Then, we state and prove our main result in Section~\ref{s:main-result}.
Finally, Section~\ref{s:side-result} shows a few side results, such as the
\conp-hardness of membership under \mm{}
and an analog of our main result for a Shortest-Minimal-Set (\sms) semantics, a variant of \mm{} also proposed in \cite{MarsaultMeyer2024}.

\section{Preliminaries}
\label{s:prelim}

\paragraph{Labels, words and regular expressions}

Given a set of symbols~$\Sigma$, \defemph{words} over $\Sigma$ are the finite sequences
of letters $a_1\cdots a_n$, and we denote the \defemph{empty word} by $\varepsilon$.
A \defemph{language} over~$\Sigma$ is a subset of~$\Sigma^*$.
A \defemph{regular expression}~$R$ over a set~$\Sigma$ is a formula obtained inductively from the set of letters, one unary postfix function~${}^*$, and two binary functions $+$ and~$\cdot$, according to the following grammar.
\begin{equation} 
    R \mathrel{\coloneqq} \varepsilon \mid a \mid R^* \mid R\mathbin{\cdot}R \mid R+R   \quad\quad\text{where $a\in\Sigma$}
\end{equation}
We may omit the $\cdot$ operator, and use it mostly to help readability.
A regular expression~$R$ over~$\Sigma$ \defemph{denotes} a language $L(R)\subseteq\Sigma^*$ inductively defined as usual:
    \begin{align*}
        L(\varepsilon) ={}& \{\varepsilon\}
        & L(R\cdot R') = {}& L(R)\cdot L(R')
        & L(R^*) ={}& L(R)^*
        \\
        \forall a\in\Sigma\quad L(a) ={}& \{a\} 
        & L(R+R') ={}& L(R)\cup L(R')
    \end{align*}
    
\paragraph{Graphs, walks and matches}
In this document, we consider (directed labeled) graph.
Formally, a graph is a triplet $\Gc=(V,L,E)$
where~$V$ is a finite set of \defemph{vertices}, $L$ is a finite set of \defemph{labels}
and~$E\subseteq V\times L \times V$ is a finite set of \defemph{edges}.
An element of~$E$ will be denoted by~$s\xrightarrow{a} t$ in order to help readability.

A walk in~$\Gc$ is an alternating sequence of vertices and labels consistent with the edges of $\Gc$.
More precisely, a \defemph{walk}~$w$ is a sequence~$(v_0,a_1,v_1,\ldots,a_k,v_k)$ where each $v_i$ belongs to $V$,
each $a_j$ belongs to $L$, and $(v_{i-1},a{i},v_{i})$ belongs to $E$ for every $i\in\{1,\ldots,k\}$.
Moreover, $k$ is called the \defemph{length} of~$w$, written $\len(w)$; $v_0$ is called the \defemph{source} of~$w$, written $\src(w)$; and $v_k$ is called the \defemph{target} of~$w$, written $\tgt(w)$.
As we do for edges, we generally use arrows to denote walks, as in $v_0\xrightarrow{a_1} v_1\xrightarrow{}\cdots \xrightarrow{a_k} v_k$.
The set of all walks in~$\Gc$ is denoted by $\walks(\Gc)$; note that vertices are walks of length $0$ and edges are walks of length $1$.
A \defemph{trail} is a walk with no repeated edge.

Two walks $w=(v_0\xrightarrow{a_1} \cdots \xrightarrow{a_k} v_k)$ and $w'=(v_0'\xrightarrow{a_1'} \cdots \xrightarrow{a_n'} v_n')$ \defemph{concatenate} if $v_k=v'_0$ and their \defemph{concatenation} is the walk $w\cdot w'=(v_0\xrightarrow{a_1} v_1\xrightarrow{}\cdots \xrightarrow{a_k} \underset{=v_0'}{v_k}\xrightarrow{a_1'} v_1'\xrightarrow{}\cdots \xrightarrow{a_n'} v_n')$.
Given two walks~$w,w'$, we say that~$w$ is a \defemph{factor} of~$w'$ if there exists two walks $w_p$ and $w_s$
such that $w'=w_p \cdot w\cdot w_s$.
In particular, a walk is a factor of itself since~$w_p$ and $w_s$ can have length $0$.

\paragraph{Enumeration complexity}

We generally use the enumeration complexity framework, which we sketch below.
See for instance \cite{Strozecki2021} for more formal definitions.
An enumeration problem~$E$ takes inputs~$I$ and outputs
a finite set $E(I)$ of \defemph{solutions}.
An enumeration algorithm~$A$ solves $E$ if it outputs each element in $E(I)$, in an order at the discretion of the algorithm.
We denote by $T_A(I,k)$ the number of steps taken by~$A$ with input $I$ before outputting the $k$-th solution and, writting~$\ell$ for the cardinal of $E(I)$, we denote by $T_A(I,\ell+1)$ the number of steps taken by~$A$ to stop.
The \defemph{preprocessing time} of~$A$ is the function mapping an instance~$I$ of~$E$ to $T_A(I,1)$
and its \defemph{delay} is the function mapping each instance~$I$ of $E$ to $max_{i\in\{1,..,\ell\}}\big(T_A(I,k+1)-T_A(I,k)\big)$, where~$\ell$ is the cardinal of~$E(I)$.
The algorithm $A$ is called \defemph{poly-delay} if its preprocessing time and its delay 
are bounded by polynomials in the size of~$I$.

\section{Considered problem}
\label{s:problem}

\begin{definition}
    Given a regular expression~$R$ and a graph~$\Gc$, a \defemph{match} to~$R$ in~$\Gc$
    is a walk~$w$ such that~$\lbl(w)\in L(R)$.
    We denote by $\matches(\Gc,R)$ the set of all matches, and given moreover two vertices $s,t\in V^2$, we denote by $\matches(\Gc,R,s,t)$ the set of all matches with source~$s$ and target~$t$.
\end{definition}

Due to cycles in the graph and Kleene stars in the expression, $\matches(\Gc,R)$ and $\matches(\Gc,R,s,t)$ is infinite in some cases.
Languages use \emph{RPQ semantics} (such as \emph{trail semantics} or \emph{shortest semantics}) to selects finitely many matches to output to the user (see~\cite{DavidFrancisMarsault2023,MarsaultMeyer2024}).
In this document, we consider a finite subset of matches defined by the function $\mm$ below.

\begin{definition}[Minimal-Multiset Semantics]
    \begin{subthm}
        \item \label{def:minimal-multiset}
        For every walk~$w$, we let~$\edges(w)$ denote the \textbf{multiset} of the edges appearing in~$w$.
        \item 
        We define the strict partial order relation~$\prec$ on walks as follows:~$w\prec w'$ if and only if $\edges(w)\subsetneq \edges(w')$.
        
        As usual, we let~$\preceq$ denote the reflexive extension of~$\prec$.
        \item We call \emph{minimal multiset semantics} the function~$\mm:\Gc\times R\to \powerset(\walks(\Gc))$ defined as follows.
        \begin{multline}\label{eq:mm}
            \mm(\Gc,R) = \bigcup_{\smash{(s,t)\in V_\Gc^{~2}}} \mm(\Gc,R,s,t) \\ 
            \text{where}\quad\mm(\Gc,R,s,t) = \min_\prec \matches(\Gc,R,s,t)
        \end{multline}
    \end{subthm}
\end{definition}

Note that there is no infinite descending chain~$w_0\succ w_1\succ \cdots \succ w_k\succ \cdots$, meaning that $\min_\prec$ is well-defined in Equation~\eqref{eq:mm}.
In fact, one may verify that $\preceq$ is a well-quasi-order over~$\walks(\Gc)$, from which the next statement follows.

\begin{lemma}\label{l:finite}
    For every graph~$\Gc$ and regular expression~$R$, then~$\mm(\Gc,R)$ is finite.
\end{lemma}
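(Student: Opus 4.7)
The plan is to prove the stronger claim that, for every pair $(s,t)\in V^2$, the set $\min_\prec \matches(\Gc,R,s,t)$ is finite. Taking a union over the finitely many such pairs, as in Equation~\eqref{eq:mm}, then yields the statement.

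The first ingredient is Dickson's lemma: since $\edgesof{\Gc}$ is finite, componentwise order on $\N^{\edgesof{\Gc}}$ is a well-quasi-order, and so every subset of $\N^{\edgesof{\Gc}}$ admits only finitely many minimal elements. I apply this to the image set $\mathcal{E} = \{\edges(w) \mid w\in\matches(\Gc,R,s,t)\} \subseteq \N^{\edgesof{\Gc}}$, which produces a finite family of minimal edge-multisets $M_1,\ldots,M_k\in\mathcal{E}$.

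Next, I argue that every minimal walk $w \in \min_\prec \matches(\Gc,R,s,t)$ must satisfy $\edges(w) = M_i$ for some $i$. Indeed, if $\edges(w)$ were not minimal in $\mathcal{E}$, some $M_j\subsetneq\edges(w)$ would belong to $\mathcal{E}$, and picking any walk $w'\in\matches(\Gc,R,s,t)$ with $\edges(w') = M_j$ would give $w'\prec w$, contradicting the minimality of $w$.

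Finally, for each $M_i$, the set of walks whose edge-multiset equals $M_i$ is finite: such a walk has length $\sum_{e} M_i(e)$ and its edge sequence is one of the (finitely many) permutations of the multiset $M_i$. Hence $\min_\prec \matches(\Gc,R,s,t)$ embeds into a finite union of finite sets, and is therefore finite. I do not foresee any deep obstacle; the only subtlety is that $\preceq$ is defined as the reflexive extension of $\prec$ rather than as the pullback of multiset inclusion, so two distinct walks sharing an edge-multiset can both be minimal simultaneously. This is precisely why the last counting step cannot be skipped, but it is elementary.
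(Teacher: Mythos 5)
Your proof is correct and matches the paper's approach: the paper simply asserts that $\preceq$ is a well-quasi-order on $\walks(\Gc)$ and lets the statement follow, while you supply the underlying verification via Dickson's lemma on $\N^{\edgesof{\Gc}}$ together with the observation that each edge-multiset is realized by only finitely many walks. The remark about two distinct minimal walks sharing an edge-multiset is a worthwhile detail that the paper leaves implicit.
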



Checking nonemptyness of $\matches(\Gc,R,s,t)$ amounts to checking accessiblity in some kind of product graph of $\Gc\times R$, a problem known to be \nl-complete \cite{AroraBarak2009}.
Since $\mm(\Gc,R,s,t)$ is empty if and only if $\matches(\Gc,R,s,t)$ is, the next statement follows.

\begin{lemma}\label{lem:emptyness}
    The following problem is \nl-complete: given a graph~$\Gc$, a regular expression~$R$
    and two vertices, is $\mm(\Gc,R,s,t)$ nonempty ?
\end{lemma}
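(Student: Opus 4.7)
The plan is to observe that $\mm(\Gc,R,s,t)$ is nonempty if and only if $\matches(\Gc,R,s,t)$ is, and to invoke the classical fact (recalled just before the statement) that checking nonemptiness of $\matches(\Gc,R,s,t)$ is \nl-complete. The reduction from the former question to the latter will give the \nl{} membership, and an easy reduction from directed $st$-reachability will give \nl-hardness.

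For the equivalence of the two nonemptiness questions: since $\preceq$ is a well-quasi-order on $\walks(\Gc)$ (as remarked just before Lemma~\ref{l:finite}), every nonempty subset of $\walks(\Gc)$ admits at least one $\prec$-minimal element. Hence $\mm(\Gc,R,s,t) = \min_\prec \matches(\Gc,R,s,t)$ is nonempty whenever $\matches(\Gc,R,s,t)$ is; the converse inclusion $\mm(\Gc,R,s,t) \subseteq \matches(\Gc,R,s,t)$ is immediate from the definition.

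For the \nl{} upper bound, I would build from~$R$ a nondeterministic finite automaton~$A_R$ with a number of states polynomial in $|R|$ (for instance by Thompson's construction), and then form the standard product with~$\Gc$. Its vertex set is $V_\Gc\times Q_{A_R}$, and an edge relates $(v,q)$ to $(v',q')$ whenever either there is an edge $v\xrightarrow{a} v'$ in~$\Gc$ together with a transition $q\xrightarrow{a} q'$ in~$A_R$, or $v=v'$ and $q\xrightarrow{\varepsilon} q'$ in~$A_R$. Matches of $\matches(\Gc,R,s,t)$ correspond exactly to walks of this product from $(s,q_0)$ to some $(t,q_f)$ where $q_f$ is accepting. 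The product graph is of polynomial size and each edge can be tested in logspace, so the question reduces to directed $st$-reachability, which lies in~\nl.

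For the \nl{} lower bound, I would reduce from directed $st$-reachability, which is itself \nl-complete. Given an instance $(G,s,t)$, label every edge of~$G$ by a fixed letter~$a$ to obtain a labeled graph~$\Gc$, and take $R = a^*$. Then $\matches(\Gc,R,s,t)$ contains exactly the walks from~$s$ to~$t$ in~$G$, so it is nonempty if and only if $t$ is reachable from~$s$ in~$G$. There is no real obstacle: the only subtlety worth spelling out is invoking the well-quasi-order to rule out that taking $\prec$-minima could make the set empty.
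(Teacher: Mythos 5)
Your proposal is correct and follows essentially the same route as the paper: the paper likewise reduces nonemptiness of $\mm(\Gc,R,s,t)$ to nonemptiness of $\matches(\Gc,R,s,t)$ (which it notes is \nl-complete via reachability in a product of $\Gc$ and $R$), observing that the two sets are empty simultaneously. You merely spell out the details the paper leaves implicit, namely the well-quasi-order argument guaranteeing that taking $\prec$-minima preserves nonemptiness, and the explicit product construction and hardness reduction.
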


\section{Main result}
\label{s:main-result}

The purpose of this document is to show that it is not possible to enumerate $\mm(\Gc,R,s,t)$ with a polynomial delay algorithm, as formally stated below.

\begin{problem}{Walk Enum Under \mm}
  \item[Input:] A regular expression~$R$, a graph~$\Gc$ and two vertices~$s,t$.
  \item[Output:] All walks in $\mm(R,\Gc,s,t)$.
\end{problem}

\begin{theorem}\label{th:enum-hardness}
        Unless $\ptime=\np$, there exists no polynomial delay algorithm to solve \problemfont{Walk Enum Under \mm}. It is already true for a fixed regular expression~$R$ with star-height~$1$.
\end{theorem}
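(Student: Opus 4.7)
The plan is to reduce from 3-SAT. Given an instance $\varphi$ with $n$ variables and $m$ clauses, I would construct in polynomial time a graph $\Gc_\varphi$ with distinguished vertices $s,t$, and employ a fixed star-height-1 regular expression $R$, so that the set $\mm(\Gc_\varphi,R,s,t)$ enjoys the following dichotomy: it always contains a polynomial family of ``easy'' walks $E$, explicitly computable in polynomial time, and it contains an additional ``hard'' walk iff $\varphi$ is satisfiable, in which case any such hard walk encodes a satisfying assignment.

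The graph construction will recycle the folklore gadget reduction behind the NP-hardness of \problemfont{$k$-Edge-Disjoint-Paths}, the same reduction the paper uses (up to Proposition~\ref{p:3sat<=>trail}) to obtain the NP-hardness of trail-semantics evaluation. Its ``red'' component encodes variables by binary-choice gadgets between a ``true'' and a ``false'' edge, and clauses by consistency gadgets forcing each clause to be satisfied by at least one literal; a trail from $s$ to $t$ through this red part exists iff $\varphi$ is satisfiable. On top of the red component I would overlay a small auxiliary component supplying a polynomial number of short $s$-to-$t$ walks on disjoint, dedicated edges; these are the easy walks. The fixed regex $R$ is chosen to match both families of walks (and no spurious combinations) while staying of star-height~$1$.

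Suppose now, for contradiction, that a polynomial-delay enumeration algorithm~$A$ exists, with preprocessing bound $q$ and delay bound $p$. Given $\varphi$, I would run $A$ on $(\Gc_\varphi,R,s,t)$ for $q(|\Gc_\varphi|)+(|E|+1)\cdot p(|\Gc_\varphi|)$ steps. After this many steps, either $A$ has halted, so $\mm(\Gc_\varphi,R,s,t)=E$ and $\varphi$ is unsatisfiable, or $A$ has emitted at least $|E|+1$ walks, one of which must by pigeonhole lie outside~$E$ and thus witness a satisfying assignment of~$\varphi$. Since identifying whether an emitted walk belongs to $E$ takes polynomial time, this yields a polynomial-time decision procedure for 3-SAT, forcing $\ptime=\np$.

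The hard part will be setting up the $\prec$-minimality relations correctly. I must ensure that walks corresponding to genuine satisfying assignments are $\prec$-minimal (so that they do appear in $\mm$), while walks traversing the red gadgets inconsistently, those not corresponding to any truth assignment, are strictly dominated by some walk in $E$ and hence excluded. Simultaneously, each walk in $E$ must itself be $\prec$-minimal and not dominated by any hard walk. Because $\prec$ compares edge multisets and Kleene stars may duplicate edge occurrences, arranging these inclusions forces the red and auxiliary parts to use disjoint edge sets and requires carefully tuned gadget widths so that the multisets of candidate walks are either strictly comparable in the intended direction or incomparable; this bookkeeping is the crux of the construction.
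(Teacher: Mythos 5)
Your overall architecture matches the paper's: reduce from 3-SAT via the folklore edge-disjoint-paths gadgetry, arrange for polynomially many explicitly computable ``easy'' walks plus a ``hard'' walk iff the instance is satisfiable, and run the hypothetical polynomial-delay algorithm for $q+(|E|+1)\cdot p$ steps to decide satisfiability. That wrapper is fine. The gap is in the one place you yourself flag as the crux, and the one concrete design decision you commit to there is fatal. You require that the easy walks live on ``disjoint, dedicated edges'' from the red component. But under $\prec$, a walk $w_2$ can only eliminate a candidate $w_1$ if $\edges(w_2)\subsetneq\edges(w_1)$, i.e.\ every edge of $w_2$, with multiplicity, already occurs in $w_1$. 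If the easy walks use edges disjoint from the red part, then no easy walk is $\prec$-below any red candidate, so all exponentially many red walks corresponding to non-satisfying assignments remain $\prec$-minimal and appear in $\mm(\Gc_\varphi,R,s,t)$ --- the dichotomy you need collapses, independently of satisfiability. The easy walks must instead be built almost entirely \emph{from} the red edges of the candidate they are meant to dominate, and in particular must repeat the very red edge that witnesses the candidate's failure to be a trail.

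This forces a second problem that your proposal has no mechanism for: to form a valid walk that repeats a red edge, the easy walk inevitably needs a few auxiliary (non-red) edges of its own (in the paper, the blue $4$- and $5$-labelled edges), and those auxiliary edges then block the inclusion $\edges(w_2)\subsetneq\edges(w_1)$ for a purely red $w_1$. The paper's resolution is the third regex component $R_3$, which matches a single fixed walk $w_3$ covering every auxiliary edge, so that hard candidates are of the form $w_1\cdot w_3$ and the inclusion $\edges(w_2)\subsetneq\edges(w_1\cdot w_3)$ holds exactly when $w_1$ is not a trail (Lemma~\ref{l:match-r3}, Propositions~\ref{p:red-edge-inclusion} and~\ref{p:equiv}); a length argument then shows hard candidates cannot dominate one another, and the colour separation shows they cannot dominate the easy walks. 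Without some analogue of this padding step, the minimality bookkeeping you defer cannot be made to work, so the proposal as stated does not go through.
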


The remainder of Section~\thesection{} is dedicated to proving Theorem~\thetheorem{}.
We give a scheme of the proof in Section~\ref{sec:scheme}, together with the notation
about the 3-SAT instance~$I$ we reduce from.
Afterwards, the fixed regular expression~$R$ is then given in Section~\ref{sec:regexp}.
We then present in section~\ref{sec:db} the graph~$\Gc_I$, built from the instance~$I$.
Finally, Section~\ref{sec:real-proof} states and shows the claims leading to Theorem~\thetheorem{}.

\begin{remark}\label{r:easy-walks}
    The idea underlying Lemma~\ref{lem:emptyness} also show that finding \textbf{some} walks in $\mm(\Gc,R,s,t)$ is easy.
    Indeed, the shortest walk in $\matches(\Gc,R,s,t)$ is minimal w.r.t.$\prec$, hence also belongs to $\mm(\Gc,R,s,t)$.
    Several other short walks are also likely to belong to $\mm(\Gc,R,s,t)$.
    These walks may be computed in polynomial time with a breadth-first search of $\Gc\times R$.

    The tension in the proof of Theorem~\ref{th:enum-hardness} comes from the existence 
    of these easy solutions.
    We need to keep their number small (in polynomial number) otherwise one may solve the 3-SAT instance~$I$ by executing one step of the computation before outputting each of the easy solutions.
    On the other hand, as explained later in Section~\ref{sec:scheme}, there will be exponentially many walks in $\matches(\Gc,R,s,t)$, one for each
    valuation of the SAT variables.
    Hence each of the exponentially many valuations that are not a witness to $I$ must be greater w.r.t.~$\prec$ than one of the polynomially many easy solutions.
\end{remark}

\begin{remark}\label{r:data-complexity}
In database theory, it is usual to consider \textit{data complexity}, that is the complexity when the query is considered to be constant (see for instance~\cite{AbiteboulHullVianu1995}).
In our case, the graph~$\Gc$ is the abstraction of the database and~$R$ is the abstraction of the query. Hence, Theorem~\thetheorem{}
implies that \problemfont{Walk Enum Under \mm} is not poly-delay in data complexity (unless $\ptime=\np$).
\end{remark}

\subsection{Proof outline}
\label{sec:scheme}

Although the term is not proper, our proof consists in a many-one reduction from 3-SAT.
In the following, we fix a 3-SAT instance~$I$ with~$\ell$
clauses $C_1,\ldots,C_j$ and $k$ variables~$x_1,\ldots,x_k$.
We denote negated literal with a bar, as in~$\bar{x_i}$.
From~$I$, we will build a graph $\Gc_I$ with two vertices $Source$ and $Target$;
we will also partition the edge into three color (red, green and blue) to help readability.

The last input of \problemfont{Walk Enum Under \mm} is a fixed expression~$R$ that will be of the shape~$R=R_1\cdot R_3  + R_2$.
First, $R_1$ is such that it matches a trail in $\Gc_I$ if and only $I$ is satisfiable.
Moreover matches to $R_1$ contain only red edges.
Then, $R_2$ matches walks that are almost trails, in the sense that exactly one edge is used twice.
The property we want is: ($\ast$) for every match~$w_1$ to~$R_1$ that is not a trail, there exists a match~$w_2$ to~$R_2$
such that $w_2\prec w_1$ (that is, $\edges(w_2)\subsetneq\edges(w_1)$), in which case $w_1$ is not returned. Indeed, since a nontrail must use some edge twice, there is a match to~$R_2$ that uses that edge twice.
However the property ($\ast$) does not precisely hold because matches to $R_2$ use extra (blue) edges.
The expression $R_3$ solves this problem by matching exactly one walk which uses every extra edge in the graph (in fact every blue or green edge).

As explained in Remark~\ref{r:easy-walks}, \problemfont{Walk Enum.~Under \mm} always has easy solutions.
In our case, the matches to~$R_2$ are the easy solutions and we will see that there is a polynomial number of them.
After they are output, the next solution must be a match to $R_1\cdot R_3$ that is a trail, if it exists.
This provides an algorithm solving~$I$: An hard solution is output if and only if~$I$ is satisfiable.

\subsection{The regular expression~$R$}
\label{sec:regexp}
\newcommand{\R}{R_1R_3  + R_2}
\newcommand{\Ri}{0\cdot (1+2+313)^*\cdot 0}
\newcommand{\Rii}{0\cdot2^*\cdot 55 \cdot 1^* \cdot 4 \cdot 1^* \cdot 55 \cdot 2^* \cdot 0}
\newcommand{\Riii}{9\cdot (8 + 755 + 646 + 557)^*\cdot X}

    The expression~$R$, defined below, is over the alphabet~$\Sigma=\{0,\ldots, 9,X\}$, has three 
    part~$R_1$,~$R_2$ and~$R_3$ that will be used later on in the proof.
    \begin{equation}
        R = \R\quad\quad
        \text{where}\quad\left\{\begin{array}{r@{}l}
             R_1 ={}& \Ri\\
             R_2 ={}& \Rii\\
             R_3 ={}& \Riii
        \end{array}\right.
    \end{equation}
   
    \begin{remark}
    Note that $R$ is of star-height 1 and is independent of the SAT instance~$I$.
    \end{remark}

    \noindent We also split~$\Sigma$ into three: $\Sigma=\Sigma_{R}\uplus\Sigma_{B}\uplus\Sigma_{G}$ where
    \begin{itemize}
        \item $\Sigma_{R}=\{0,1,2,3\}$ contains the \emph{red} letters;
        \item $\Sigma_{B}=\{4,5\}$ contains the \emph{blue} letters;
        \item $\Sigma_{G}=\{6,7,8,9,X\}$ contains the \emph{green} letters.
    \end{itemize}
    
    \begin{remark}\label{rem:color}
        Expression~$R_1$ is red, that is $R_1$ is over $\Sigma_{R}$.
        Expression~$R_2$ is red and blue, that is $R_2$ is over $\Sigma_{R}\cup\Sigma_{B}$.
        Expression~$R_3$ is blue and green, that is $R_3$ is over $\Sigma_{B}\cup\Sigma_{G}$.
    \end{remark}

\subsection{Presentation of the gadgets and the graph~$\Gc_I$}
\label{sec:db}

Figures \ref{fig:gadget-start} to \ref{fig:gadget-glu} show the different gadgets
used in~$\Gc_I$.
They collectively define~$G_I$ entirely.
We use the following drawing convention.
\begin{itemize}
    \item The color of an edge corresponds to the color of the label held by the edge.
    \item Vertices drawn with a plain line are inside the gadget and vertices drawn with a dashed line are outside the gadget.
    \item Edges drawn with a plain line have both endpoints in the gadget.
    \item Edges drawn with a dotted or a dashed line have one of their endpoints outside of the gadget.
    \item Dashed edges are unconditional, while
    dotted edges (in Fig~\ref{fig:gadget-var} only) might or might not exist depending on the instance~$I$.
\end{itemize}

\medskip 

Aside from the glu gadget, described last, they are given roughly in the order a (red) walk matching~$R_1$ will traverse them.

 Figure~\ref{fig:gadget-start} gives the \defemph{start gadget}. 
        Aside from the vertex $Source$, it contains 2 vertex for each variable~$x_i$,
        that connects to the beginning of the positive and negative side of gadget $x_i$, respectively.
        Each walk matched by~$R_1$, $R_2$ or~$R$ will start in vertex $Source$ in this gadget.

\begin{figure}[hp]\centering
    \makebox[\linewidth][c]{\begin{tikzpicture}
    \path node[state] (start) {$Source$}
        ++(0:35mm) node[state] (start1) {$start_{x_1}$}
        ++(0:24mm) node[state] (startn1) {$start_{\overline{x_1}}$}
        ++(0:3cm) node[state] (startk) {$start_{x_k}$}
        ++(0:24mm) node[state] (startnk) {$start_{\overline{x_k}}$}
        ++(0:3cm) node[out state] (x0) {$x_0\rhd$}
        ;
        
    \path[p1, transition] (start) to node[above] {$0$} (start1);
    \path[p1, transition] (start1) to node[above] {$2$} (startn1);
    \path (startn1) to node{$\cdots$} (startk);
    \path[p1, transition] (startk) to node[above] {$2$} (startnk);
    \path[p1, out transition] (startnk) to node[above] {$2$} (x0);

    \path[draw] ($(start.north west)+(-5mm,5mm)$) -- ($(startnk.north east)+(5mm,5mm)$) -- ($(startnk.south east)+(5mm,-5mm)$) -- ($(start.south west)+(-5mm,-5mm)$) -- cycle;
    
    \path (start1) ++(90:3cm) node[out state] (SX1) {$Sx_1\rhd $};   
    \path[p2,out transition] (start1) to[bend right] node[right,midway] {$5$} (SX1);
    \path[p3,out transition] (SX1) to[bend right] node[left,midway] {$7$} (start1);

    \path (startn1) ++(90:3cm) node[out state] (SnX1) {$S{\bar x}_1\rhd $};   
    \path[p2,out transition] (startn1) to[bend right] node[right,midway] {$5$} (SnX1);
    \path[p3,out transition] (SnX1) to[bend right] node[left,midway] {$7$} (startn1);

    \path (startk) ++(90:3cm) node[out state] (SXk) {$Sx_k\rhd $};   
    \path[p2,out transition] (startk) to[bend right] node[right,midway] {$5$} (SXk);
    \path[p3,out transition] (SXk) to[bend right] node[left,midway] {$7$} (startk);

    \path (startnk) ++(90:3cm) node[out state] (SnXk) {$S{\bar x}_k\rhd $};   
    \path[p2,out transition] (startnk) to[bend right] node[right,midway] {$5$} (SnXk);
    \path[p3,out transition] (SnXk) to[bend right] node[left,midway] {$7$} (startnk);
    
\end{tikzpicture}}
    \caption{Start gadget}
    \label{fig:gadget-start}
\end{figure}
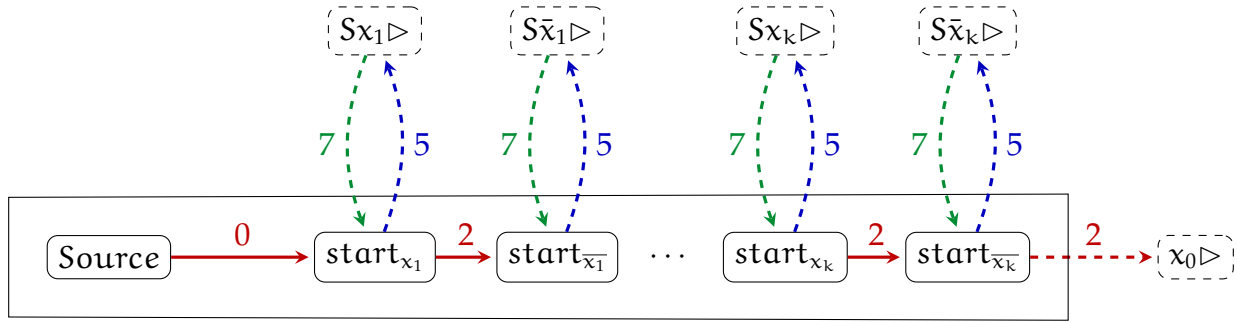
        
For each variable~$x_i$, $\Gc_I$ contains one \defemph{variable} gadget defined in Figure~\ref{fig:gadget-var}. It is formed of a \defemph{positive side} (top part: $\rhd Sx_i \rightarrow x^0_i\rightarrow \cdots \rightarrow x^\ell_i$),
    of a \defemph{negative side} (bottom part: $\rhd S{\bar x}_i \rightarrow {\bar x}^0_i\rightarrow \cdots \rightarrow {\bar x}^\ell_i$), and of two vertices $\rhd x_i$ and $x_i\rhd$.
    It is linked to the prior variable gadget (left) and the next variable gadget (right).
    Finally, two edges connecting $C_j$ to the positive (resp. negative) side of this gadget if $x_i$ (resp $\lnot x_i$) is a literal in $C_j$.

\begin{figure}[hpt]\centering
    \makebox[\linewidth][c]{\begin{tikzpicture}

    \node[state] (X0) {$\rhd x_i$};

    \path (X0) ++(2cm,12mm) node[state] (x00) {$x_i^0$} ++(0:2cm) node[state] (x01) {$x_i^1$} ++(0:2cm) node[state] (x02) {$x_i^2$} ++(0:3cm) node[state] (x0l) {$x_i^{\ell}$};
    
    \path (X0) ++(90:2cm) node[state] (SX0) {$\rhd Sx_i$};    
    \path (X0) ++(-90:2cm) node[state] (SnX0) {$\rhd S{\bar x}_i$};

    \path (X0) ++(2cm,-12mm) node[state] (nx00) {${\bar x}_i^0$} ++(0:2cm) node[state] (nx01) {${\bar x}_i^1$} ++(0:2cm) node[state] (nx02) {${\bar x}_i^2$} ++(0:3cm) node[state] (nx0l) {${\bar x}_i^{\ell}$};
    
    \path (nx0l) ++(2cm,12mm) node[state] (endX0) {$x_i \rhd$};

    \path (endX0) ++(90:2cm) node[state] (endSX0) {$Sx_i\rhd $};    
    \path (endX0) ++(-90:2cm) node[state] (endSnX0) {$S{\bar x}_i\rhd $};

    \path (endSX0) ++(90:3cm) node[out state] (end0) {$end_{x_i}$};   
    \path[p2,out transition] (endSX0) to[bend right] node[right,midway] {$5$} (end0);
    \path[p3,out transition] (end0) to[bend right] node[left,midway] {$7$} (endSX0);
    
    \path (SX0) ++(90:3cm) node[out state] (start0) {$start_{x_i}$};   
    \path[p3,out transition] (SX0) to[bend right] node[right,midway] {$7$} (start0);
    \path[p2,out transition] (start0) to[bend right] node[left,midway] {$5$} (SX0);

    \path (endSnX0) ++(-90:3cm) node[out state] (endn0) {$end_{\overline{x_i}}$};   
    \path[p2,out transition] (endSnX0) to[bend left] node[right,midway] {$5$} (endn0);
    \path[p3,out transition] (endn0) to[bend left] node[left,midway] {$7$} (endSnX0);
    
    \path (SnX0) ++(-90:3cm) node[out state] (startn0) {$start_{\overline{x_i}}$};   
    \path[p3,out transition] (SnX0) to[bend left] node[right,midway] {$7$} (startn0);
    \path[p2,out transition] (startn0) to[bend left] node[left,midway] {$5$} (SnX0);

    \path(endX0) ++(0:3cm) node[out state] (X1) {$\rhd x_{i+1}$};
    \path(endSX0) ++(0:3cm) node[out state] (SX1) {$\rhd Sx_{i+1}$};
    \path(endSnX0) ++(0:3cm) node[out state] (SnX1) {$\rhd S{\bar x}_{i+1}$};
        
    \path(X0) ++(180:3cm) node[out state] (endXm1) {$x_{i-1}\rhd$};
    \path(SX0) ++(180:3cm) node[out state] (endSXm1) {$Sx_{i-1}\rhd$};
    \path(SnX0) ++(180:3cm) node[out state] (endSnXm1) {$S{\bar x}_{i-1}\rhd$};

    \path[p1,transition] (X0) to node[above,midway] {$1$} (x00);
    \path[p1,transition] (x00) to node[above,midway] {$1$} (x01);
    \path[p3,transition] (x00) to[bend angle=60,bend left] node[above,midway] {$6$} (x01);
    \path[p2,transition] (x01) to[bend angle=30,bend left] node[below,midway] {$4$} (x00);
    \path[p1,transition] (x01) to node[above,midway] {$1$} (x02);
    \path[p3,transition] (x01) to[bend angle=60,bend left] node[above,midway] {$6$} (x02);
    \path[p2,transition] (x02) to[bend angle=30,bend left] node[below,midway] {$4$} (x01);
    \path (x02) to node[midway]{$\cdots$} (x0l);
    \path[p1,transition] (x0l) to node[above,midway] {$1$}  (endX0);
    \path[p2,transition] (SX0) to node[above,midway] {$5$}  (x00);
    \path[p2,transition] (x0l) to node[above,midway] {$5$}  (endSX0);

    \path[p1,transition] (X0) to node[below,midway] {$1$} (nx00);
    \path[p1,transition] (nx00) to node[below,midway] {$1$} (nx01);
    \path[p3,transition] (nx00) to[bend angle=60,bend right] node[below,midway] {$6$} (nx01);
    \path[p2,transition] (nx01) to[bend angle=30,bend right] node[above,midway] {$4$} (nx00);
    \path[p3,transition] (nx01) to[bend angle=60,bend right] node[below,midway] {$6$} (nx02);
    \path[p2,transition] (nx02) to[bend angle=30,bend right] node[above,midway] {$4$} (nx01);
    \path[p1,transition] (nx01) to node[below,midway] {$1$} (nx02);
    \path (nx02) to node[midway]{$\cdots$} (nx0l);
    \path[p1,transition] (nx0l) to node[below,midway] {$1$}  (endX0);
    \path[p2,transition] (SnX0) to node[below,midway] {$5$}  (nx00);
    \path[p2,transition] (nx0l) to node[below,midway] {$5$}  (endSnX0);

    \path[draw] ($(SX0.north west)+(-5mm,5mm)$) -- ($(endSX0.north east)+(5mm,5mm)$) -- ($(endSnX0.south east)+(5mm,-5mm)$) -- ($(SnX0.south west)+(-5mm,-5mm)$) -- cycle;

    \foreach\i in {01,02,0l}{
        \path(x\i) ++(-3mm,30mm) coordinate (tmp1){};
        \path[p1, out transition,dotted] (x\i) to node[near end,left] {$3$} (tmp1);
    }
    \foreach\i in {01,02,00}{
        \path(x\i) ++(3mm,30mm) coordinate (tmp2){};
        \path[p1, out transition,dotted] (tmp2) to node[near start,right] {$3$} (x\i);
    }
    \path ($(tmp1)!.5!(tmp2)$) node[anchor=south] {Possible edges to/from some gadget $C_j$};

    \foreach\i in {01,02,0l}{
        \path(nx\i) ++(-3mm,-30mm) coordinate (tmp1){};
        \path[p1, out transition,dotted] (nx\i) to node[near end,left] {$3$} (tmp1);
    }
    \foreach\i in {01,02,00}{
        \path(nx\i) ++(3mm,-30mm) coordinate (tmp2){};
        \path[p1, out transition,dotted] (tmp2) to node[near start,right] {$3$} (nx\i);
    }
    \path ($(tmp1)!.5!(tmp2)$) node[anchor=north] {Possible edges to/from some gadget $C_j$};

    \path[p1, out transition] (endX0) to node[above,midway] {2} (X1);  
    \path[p3, out transition] (endSX0) to node[above,midway] {8} (SX1);      \path[p3, out transition] (endSnX0) to node[above,midway] {8} (SnX1);  
    
    \path[p1, out transition] (endXm1) to node[above,midway] {2} (X0);
    \path[p3, out transition] (endSXm1) to node[above,midway] {8} (SX0);
    \path[p3, out transition] (endSnXm1) to node[above,midway] {8} (SnX0);
\end{tikzpicture}}
    \caption{Gadget for variable $x_i$, $i\in\{1,\ldots,k\}$}
    \label{fig:gadget-var}
\end{figure}
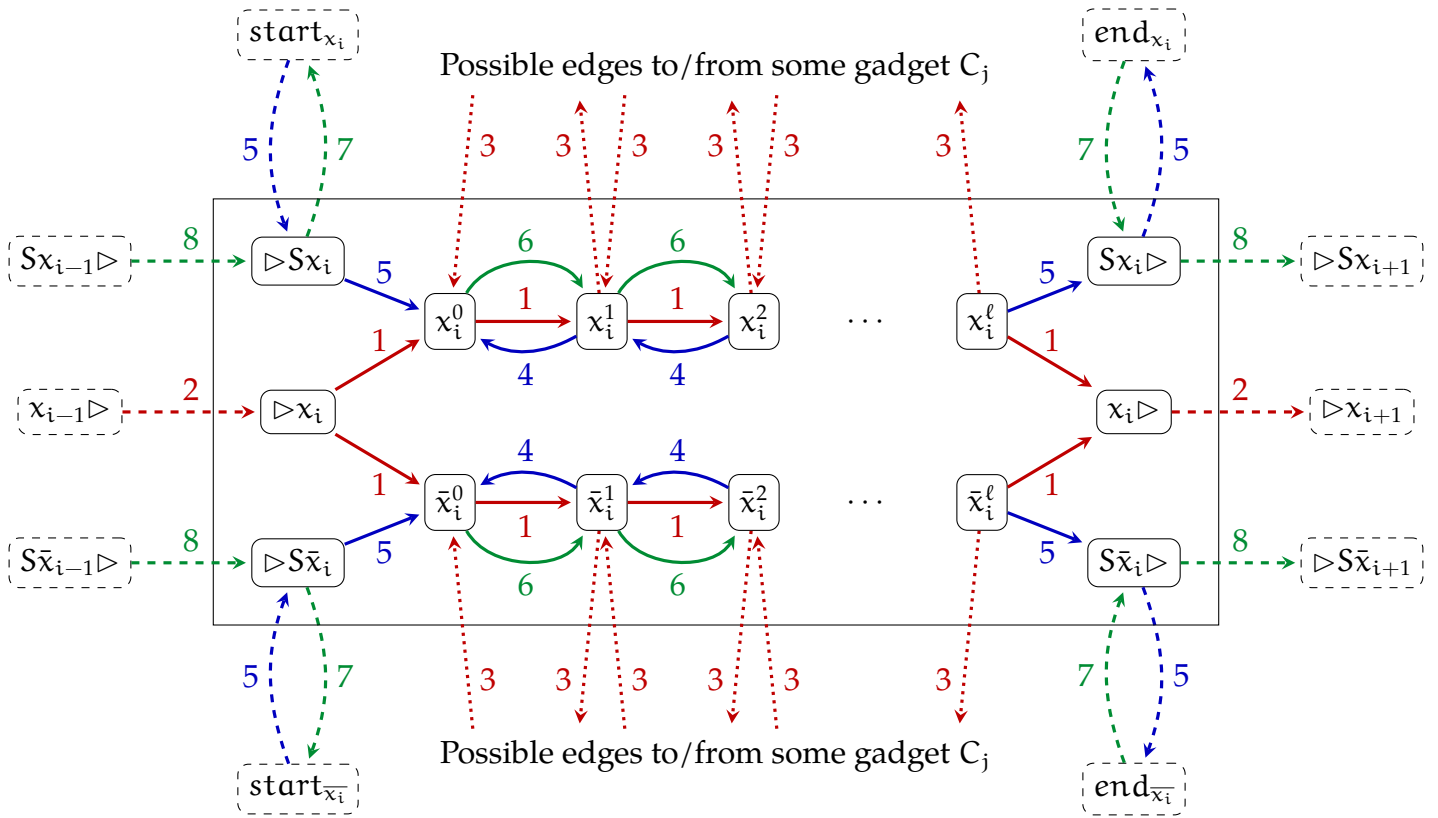

For each clause~$C_j$, there is one \defemph{clause gadget} defined in Figure~\ref{fig:gadget-cl}.
It contains two vertices, is linked to the prior and following clause gadget.
As said before, it is connected by two edges to a variable gadget~$x_i$ if~$x_i$ appears
in~$C_j$. It is connected to the positive (resp.~negative) side if it appears positively (resp.~negatively) in~$C_j$.
    
\begin{figure}[hpt]\centering
    \begin{tikzpicture}

    \path node[state] (-Cj) {$\rhd C_j$}
        ++(80mm,0cm) node[state] (Cj-) {$C_j\rhd$};

    \path[gadget] let \p1=(current bounding box.north east), \p2=(current bounding box.south west) in
        (\x1+5mm,\y1+5mm) rectangle (\x2-5mm,\y2-5mm);

    \path (-Cj) ++(-30mm,0) node[out state,] (Cjm1-) {$C_{j-1}\rhd $};
    \path[p1, out transition] (Cjm1-) to node[above] {$2$} (-Cj);

    \path (Cj-) ++(30mm,0) node[out state] (-Cjp1) {$\rhd C_{j+1}$};
    \path[p1, out transition] (Cj-) to node[above] {$2$} (-Cjp1);

    \path (-Cj) 
        ++(25mm,20mm)
        node[out state] (-aj1) {$\beta^{j-1}$}
        ++(0,15mm)
        node[out state] (-aj2) {$\gamma^{j-1}$}
        ++(0,15mm)
        node[out state] (-aj3) {$\delta^{j-1}$};

    \path[out transition, p1, rounded corners] (-Cj.120) |- node[above,near end] {$3$} (-aj3);
    \path[out transition, p1, rounded corners] (-Cj.90) |- node[above,near end] {$3$} (-aj2);
    \path[out transition, p1, rounded corners] (-Cj.60) |- node[above,near end] {$3$} (-aj1);

    \path (Cj-) 
        ++(-25mm,20mm)
        node[out state] (aj1-) {$\beta^{j}$}
        ++(0,15mm)
        node[out state] (aj2-) {$\gamma^{j}$}
        ++(0,15mm)
        node[out state] (aj3-) {$\delta^{j}$};

    \path[out transition, p1, rounded corners] (aj3-) -| node[near start,above] {$3$} (Cj-.60);
    \path[out transition, p1, rounded corners] (aj2-) -| node[near start,above] {$3$} (Cj-.90);
    \path[out transition, p1, rounded corners] (aj1-) -| node[near start,above] {$3$} (Cj-.120) ;

\end{tikzpicture}
    \caption{Gadget for the clause $C_j$, $j\in\{1,\ldots,\ell\}$, decomposed as~$C_j=\beta \vee \gamma \vee \delta$ with $\beta,\gamma,\delta\in\{x_1,\ldots,x_n\}\cup\{\bar{x}_1,\ldots,\bar{x}_n\}$.
    For instance, if~$\beta=x_2$ (resp.~${\bar x}_5$), then $\beta^j$ refers to the vertex $x_2^j$ (resp.~${\bar x}_5^j$) in gadget $x_2$ (resp. in gadget~$x_5$).}
    \label{fig:gadget-cl}
\end{figure}
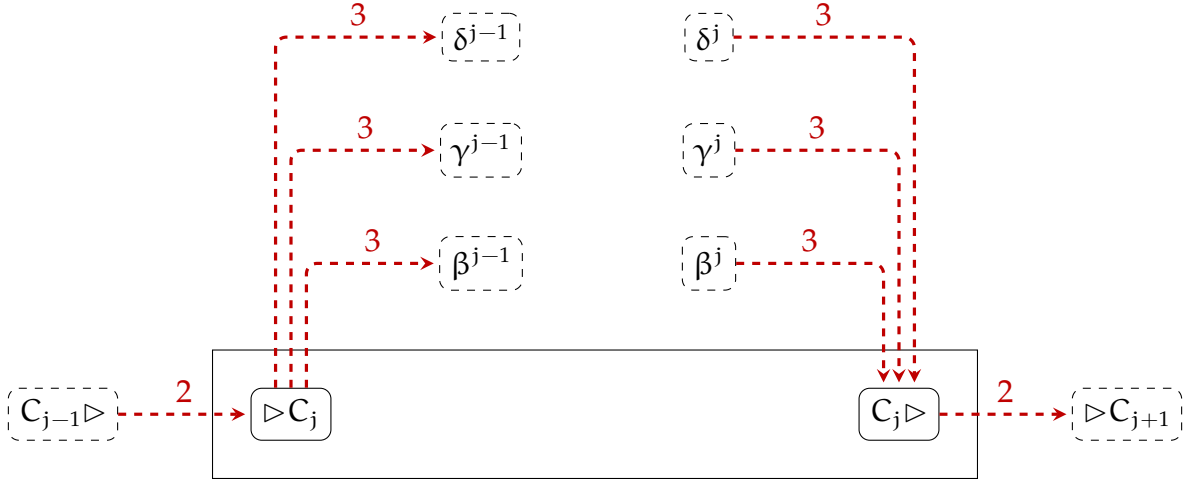

Figure~\ref{fig:gadget-end} defines the \defemph{end gadget}
Aside from the vertex $Target$, which will be the target of all matched walks, it contains 2 vertex for each variable~$x_i$.
They are connected to the end of the positive and negative side of gadget $x_i$, respectively.
        
\begin{figure}[h]\centering
    \makebox[\linewidth][c]{\begin{tikzpicture}
    \path node[out state] (Cl) {$C_{\ell+1}$}
        ++(0:3cm) node[state] (end1) {$end_{x_1}$}
        ++(0:24mm) node[state] (endn1) {$end_{\overline{x_1}}$}
        ++(0:3cm) node[state] (endk) {$end_{x_k}$}
        ++(0:24mm) node[state] (endnk) {$end_{\overline{x_k}}$}
        ++(0:34mm) node[state] (end) {$Target$}
        ;
        
    \path[p1, out transition] (Cl) to node[above] {$2$} (end1);
    \path[p1, transition] (end1) to node[above] {$2$} (endn1);
    \path (endn1) to node{$\cdots$} (endk);
    \path[p1, transition] (endk) to node[above] {$2$} (endnk);
    \path[p1, transition] (endnk) to node[above] {$0$} (end);

    \path[draw] ($(end1.north west)+(-5mm,5mm)$) -- ($(end.north east)+(5mm,5mm)$) -- ($(end.south east)+(5mm,-5mm)$) -- ($(end1.south west)+(-5mm,-5mm)$) -- cycle;
    
    \path (end1) ++(90:3cm) node[out state] (SX1) {$Sx_1\rhd $};   
    \path[p3,out transition] (end1) to[bend right] node[right,midway] {$7$} (SX1);
    \path[p2,out transition] (SX1) to[bend right] node[left,midway] {$5$} (end1);

    \path (endn1) ++(90:3cm) node[out state] (SnX1) {$S{\bar x}_1\rhd $};   
    \path[p3,out transition] (endn1) to[bend right] node[right,midway] {$7$} (SnX1);
    \path[p2,out transition] (SnX1) to[bend right] node[left,midway] {$5$} (endn1);

    \path (endk) ++(90:3cm) node[out state] (SXk) {$Sx_k\rhd $};   
    \path[p3,out transition] (endk) to[bend right] node[right,midway] {$7$} (SXk);
    \path[p2,out transition] (SXk) to[bend right] node[left,midway] {$5$} (endk);

    \path (endnk) ++(90:3cm) node[out state] (SnXk) {$S{\bar x}_k\rhd $};   
    \path[p3,out transition] (endnk) to[bend right] node[right,midway] {$7$} (SnXk);
    \path[p2,out transition] (SnXk) to[bend right] node[left,midway] {$5$} (endnk);

    \path (end) ++(-10mm,3cm) node[out state] (Sx0) {$S{x_0}\rhd$};   
    \path[p3,out transition] (end) to[bend left] node[left,midway] {$9$} (Sx0);    
    \path (end) ++(10mm,3cm) node[out state] (Sxkp1) {$\rhd S\bar{x}_{k+1}$};   
    \path[p3,out transition]  (Sxkp1) to[bend left]  node[right,midway] {$X$}(end);
\end{tikzpicture}}
    \caption{End gadget}
    \label{fig:gadget-end}
\end{figure}
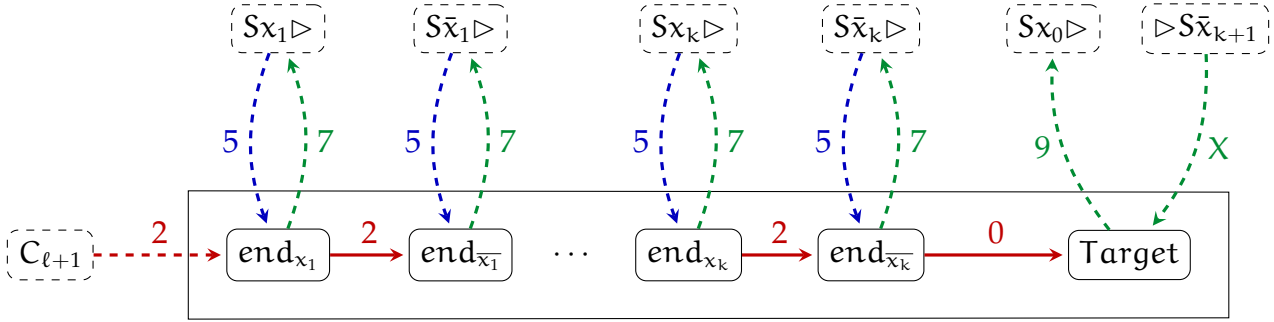

Finally, Figure~\ref{fig:gadget-glu} defines a few missing vertices that connect some gadgets to each other.
First, it links with red edges the start gadget to the first variable gadget,
then the last variable gadget to the first clause gadget,
and the last clause gadget to the end gadget.
Second, it helps create with green edges a cycle from $Target$ to $Target$ that traverse the positive sides and the negative sides of all variable gadget.

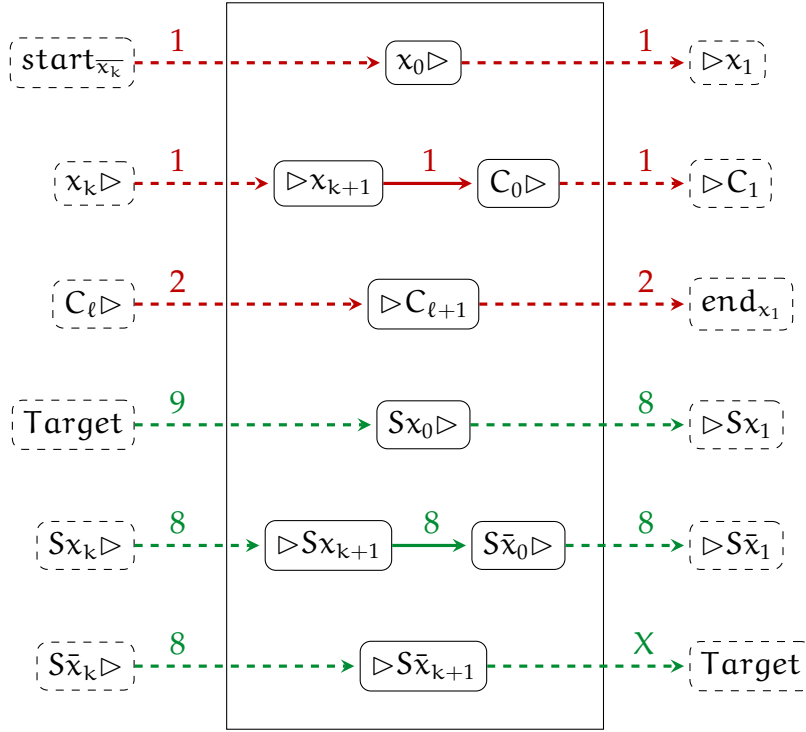
\begin{figure}[h]\centering
    \begin{tikzpicture}
    \path 
        ++(0:25mm) node[state] (xstart) {$x_0\rhd$}
        ++(-12.5mm,-16mm) node[state] (xend) {$\rhd x_{k+1}$}
        ++(25mm,0) node[state] (Cstart) {$C_0\rhd $}
        ++(-12.5mm,-16mm) node[state] (Cend) {$\rhd C_{\ell+1}$}
        ++(0,-16mm) node[state] (Sxstart) {$Sx_0\rhd$}
        ++(-12.5mm,-16mm) node[state] (Sxend) {$\rhd Sx_{k+1}$}
        ++(0:25mm) node[state] (Snxstart) {$S{\bar x}_0\rhd$}
        ++(-12.5mm,-16mm) node[state] (Snxend) {$\rhd S{\bar x}_{k+1}$}
    ;
    \path[gadget] let \p1=(current bounding box.north east), \p2=(current bounding box.south west) in
        (\x1+5mm,\y1+5mm) rectangle (\x2-5mm,\y2-5mm);

    \path (xstart) ++(-38mm,0) node[out state, anchor=east] (nstartk) {$start_{\overline{x_k}}$};
    \path[p1,out transition] (nstartk) to node[above right, at start, xshift=3mm] {$1$} (xstart);

    \path (xstart) ++(35mm,0) node[out state, anchor=west] (x1) {$\rhd x_1$};
    \path[p1,out transition] (xstart) to node[above left, at end, xshift=-3mm] {$1$} (x1);

    \path (xend) ++(-25.5mm,0) node[out state, anchor=east] (xk) {$x_{k}\rhd$};
    \path[p1,out transition] (xk) to node[above right, at start, xshift=3mm] {$1$} (xend);
    
    \path[p1,transition] (xend) to node[above,midway] {$1$} (Cstart);
    
    \path (Cstart) ++(22.5mm,0) node[out state, anchor=west] (C1) {$\rhd C_1$};
    \path[p1,out transition] (Cstart) to node[above left, at end, xshift=-3mm] {$1$} (C1);

    \path (Cend) ++(-38mm,0) node[out state, anchor=east] (Cl) {$C_\ell\rhd$};
    \path[p1,out transition] (Cl) to node[above right, at start, xshift=3mm] {$2$} (Cend);

    \path (Cend) ++(35mm,0) node[out state, anchor=west] (end1) {$end_{x_1}$};
    \path[p1,out transition] (Cend) to node[above left, at end, xshift=-3mm] {$2$} (end1);

    \path (Sxstart) ++(-38mm,0) node[out state, anchor=east] (End) {$Target$};
    \path[p3,out transition] (End) to node[above right, at start, xshift=3mm] {$9$} (Sxstart);
    
    \path (Sxstart) ++(35mm,0) node[out state, anchor=west] (SX1) {$\rhd Sx_1$};
    \path[p3,out transition] (Sxstart) to node[above left, at end, xshift=-3mm] {$8$} (SX1);

    \path (Sxend) ++(-25.5mm,0) node[out state, anchor=east] (Sxk) {$Sx_k\rhd$};
    \path[p3,out transition] (Sxk) to node[above right, at start, xshift=3mm] {$8$} (Sxend);
    
    \path[p3,transition] (Sxend) to node[above,midway] {$8$} (Snxstart);   
    
    \path (Snxstart) ++(22.5mm,0) node[out state, anchor=west] (SnX1) {$\rhd S{\bar x}_1$};
    \path[p3,out transition] (Snxstart) to node[above left, at end, xshift=-3mm] {$8$} (SnX1);

    \path (Snxend) ++(-38mm,0) node[out state, anchor=east] (Snxk) {$S{\bar x}_k\rhd$};
    \path[p3,out transition] (Snxk) to node[above right, at start, xshift=3mm] {$8$} (Snxend);
    
    \path (Snxend) ++(35mm,0) node[out state, anchor=west] (End2) {$Target$};
    \path[p3,out transition] (Snxend) to node[above left, at end, xshift=-3mm] {$X$} (End2);
    
\end{tikzpicture}
    \caption{Glu gadget}
    \label{fig:gadget-glu}
\end{figure}

\subsection{Proof of Theorem~\ref{th:enum-hardness}}
\label{sec:real-proof}

First, let us consider the walks matching~$R_1$.
Recall that  $R_1=\Ri$.
A simple verification yields the following lemma.
\begin{lemma}\label{l:match-r1}
    Any walk~$w$ matching~$R_1$ meets the following properties.
    \begin{subthm}
        \item All edges in~$w$ are red, that is labelled over $\{0,1,2,3\}$.
        \item The source of $w$ is $Source$.
        \item \label{l:match-r1:side} The walk~$w$ traverse once each gadget $x_i$ and 
        exactly one of the following walks is a factor of~$w$.
        \begin{align*}
            &\rhd x_i \rightarrow x_i^0\rightarrow\cdots\rightarrow x_i^\ell\rightarrow x_i\rhd \\
            & \rhd x_i \rightarrow {\bar x}_i^0\rightarrow\cdots\rightarrow {\bar x}_i^\ell\rightarrow x_i\rhd 
        \end{align*}
            If the former is a factor of~$w$, we say that~$w$ \defemph{traverses the positive side of gadget $x_i$},
            and or that it \defemph{traverses the negative side of gadget $x_i$} otherwise.
        \item The walk~$w$ traverse once each gadget $C_j$.
        \item The target of~$w$ is $Target$.
        \item \label{l:match-r1:length} The length \footnote{The precise length of~$w$ is unimportant.  In the following we use only that all walks matching $R_1$ have the same length, and that this length is polynomial in~$I$.} of~$w$ is equal to~$k\ell+7k+4\ell+3$.
    \end{subthm}
\end{lemma}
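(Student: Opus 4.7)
The plan is to exploit the tight constraints imposed by $R_1 = \Ri$ and perform a forward tracing through $\Gc_I$, gadget by gadget. Part~(a) I would discharge immediately by Remark~\ref{rem:color}, since the label of $w$ lies in $\{0,1,2,3\}^* = \Sigma_R^*$, making every edge of $w$ red by construction.

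For (b) and (e), I would first enumerate the $0$-labelled edges of $\Gc_I$: there are exactly two, namely $Source \xrightarrow{0} start_{x_1}$ in the start gadget and $end_{\overline{x_k}} \xrightarrow{0} Target$ in the end gadget. Since $w$ begins and ends with a $0$-edge, a brief case analysis would rule out starting with the latter (no outgoing red edge at $Target$) and ending with the former (no incoming edge at $Source$). Hence $w$ starts at $Source$ and ends at $Target$.

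The core of the argument is for parts (c), (d) and (f), and I would first analyse occurrences of the letter $3$. Since $3$ appears in $R_1$ only inside the factor $313$, every $3$-edge of $w$ belongs to a consecutive triple labelled $313$. Enumerating the $3$-edges of $\Gc_I$, they are all of the form $\rhd C_j \xrightarrow{3} \mathrm{lit}^{j-1}$ or $\mathrm{lit}^j \xrightarrow{3} C_j \rhd$ for some literal of $C_j$. Because $C_j \rhd$ has no outgoing $1$-edge (its only red outgoing edge is a $2$ toward the next clause gadget), the first $3$ of a $313$-triple cannot land on any $C_j \rhd$; it must land on some $\mathrm{lit}^{j-1}$, so the middle $1$ is forced to be $\mathrm{lit}^{j-1} \xrightarrow{1} \mathrm{lit}^j$, and the closing $3$ leads back to $C_j \rhd$. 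Consequently every $313$-factor of $w$ is a complete clause-gadget crossing $\rhd C_j \to \mathrm{lit}^{j-1} \to \mathrm{lit}^j \to C_j \rhd$.

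It then remains to trace the portions of $w$ outside the $313$-triples, which use only $1$- and $2$-edges. A vertex-by-vertex inspection will show this part is essentially forced: $2$-edges carry $w$ through the start gadget; a short glue transition leads to $\rhd x_1$; at each $\rhd x_i$ the only red outgoing edges are the two $1$-edges toward $x_i^0$ and $\overline{x_i}^0$, giving the side choice of (c); and at each internal vertex of the chosen side, the only viable continuation is the $1$-edge to the next vertex in the chain, since the alternative outgoing $3$ cannot begin a valid $313$-atom by the argument above. This forces exactly one of the two straight chains in (c) as a factor. One then crosses each clause gadget exactly once via a single $313$-atom, proving (d), and reaches the end gadget through forced $2$-edges terminating with the final $0$. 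For (f), summing the edges along each forced segment gives the claimed polynomial; per the footnote its precise value is immaterial. The only mild obstacle will be the bookkeeping at the glue vertices, which I would handle using Figure~\ref{fig:gadget-glu}.
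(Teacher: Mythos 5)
Your proposal is correct and takes essentially the same approach as the paper: the paper dispatches this lemma with the single remark that ``a simple verification yields the following lemma,'' and your forward trace --- localising the $0$-edges to fix source and target, showing every $3$-edge must sit inside a complete $\rhd C_j\to\mathrm{lit}^{j-1}\to\mathrm{lit}^{j}\to C_j\rhd$ crossing, and checking that the remaining $1$/$2$-portions are forced vertex by vertex --- is precisely that verification carried out in detail. The only caveat is the exact constant in item~\enumstyle{f}, which you do not tally; as the paper's own footnote states, only the facts that all matches to $R_1$ share the same, polynomially bounded length are used later, so this omission is harmless.
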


\begin{definition}\label{d:valuation}
    With any walk~$w$ matching~$R_1$ is associated a \emph{valuation} $f_w$,
    the one that maps each variable~$x_i$ to $\top$ if it traverses the negative side of the gadget~$x_i$, and to~$\bot$ if it traverses its positive side.

    Be aware that sign of~$x_i$ in the valuation~$f_w$ is the opposite of the sign of the side traversed in gadget~$x_i$.
\end{definition}

\begin{proposition}\label{p:3sat<=>trail}
    There exists a trail in~$\Gc_I$ matching $R_1$ if and only if the 3-\problemfont{Sat} instance $I$ is satisfiable.
\end{proposition}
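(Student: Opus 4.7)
The plan is to combine the structural analysis of $R_1$-matching walks given by Lemma~\ref{l:match-r1} with a careful analysis of how the $(1+2+313)^*$ skeleton interacts with each clause gadget.

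The first step is to observe that the letter~$3$ appears in~$R_1$ only inside the token $313$, so every occurrence of $3$ in a matching walk belongs to such a token. Since the only outgoing edges at $\rhd C_j$ are labelled~$3$, the walk must leave $\rhd C_j$ at the start of a $313$ token. Direct inspection of $\Gc_I$ shows that the only $313$ factors using the $3$-edges of $C_j$ are of the form
\[ \rhd C_j \xrightarrow{3} \beta^{j-1} \xrightarrow{1} \beta^{j} \xrightarrow{3} C_j\rhd, \]
where $\beta$ is one of the three literals of~$C_j$, and the $1$-edge lies on the positive side of the variable gadget if $\beta$ is a positive literal and on the negative side otherwise. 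By Lemma~\ref{l:match-r1}, each clause gadget is traversed exactly once, hence every walk matching~$R_1$ selects, for each $j$, a unique literal $\beta_j \in C_j$.

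For the ($\Rightarrow$) direction, suppose $w$ is a trail matching~$R_1$, and fix a clause~$C_j$. The $1$-edge $\beta_j^{j-1}\to\beta_j^j$ consumed by the $313$ token lies on one side of the variable gadget containing~$\beta_j$; by Lemma~\ref{l:match-r1:side}, the main traversal of that gadget uses \emph{all} $1$-edges of one entire side. The trail condition forces these two sides to differ, which by Definition~\ref{d:valuation} is precisely the condition that~$\beta_j$ is true under~$f_w$: if $\beta_j=x_i$, the main walk traverses the negative side and $f_w(x_i)=\top$; the case $\beta_j=\bar x_i$ is symmetric. Hence every clause contains a literal made true by~$f_w$, so $I$ is satisfiable.

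For the ($\Leftarrow$) direction, given a satisfying assignment~$v$, pick for each $j$ a true literal $\beta_j\in C_j$ and construct $w$ explicitly: start at $Source$ with a $0$-edge, traverse the start gadget via $2$-edges, go through each variable gadget~$x_i$ on the side opposite to the sign prescribed by $v$ (negative side when $v(x_i)=\top$), chain variable gadgets by $2$-edges, traverse each clause gadget~$C_j$ via the $313$ token at~$\beta_j$, chain clauses by $2$-edges, and finish through the end gadget with $2$-edges and a final $0$-edge to $Target$. The label of $w$ visibly matches~$R_1$, and $w$ is a trail because each $1$-edge used inside a $313$ token lies on the side opposite to the main traversal (this is exactly the condition $v(\beta_j)=\top$), distinct $j$ yield distinct $1$-edges $\beta_j^{j-1}\to\beta_j^{j}$ (the source/target indices differ), and all $0$-, $2$- and $3$-edges are used at most once by direct inspection of the gadgets. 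The main obstacle in this proof is pinning down the uniqueness of the $313$ decomposition at each clause gadget from the grammar and the local edge labelling of~$\Gc_I$; once this is established, both directions reduce to tracking which edges are used where.
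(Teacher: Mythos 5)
Your proof is correct and follows essentially the same route as the paper: the forward direction uses the forced $\rhd C_j\xrightarrow{3}\beta^{j-1}\xrightarrow{1}\beta^{j}\xrightarrow{3}C_j\rhd$ factor plus the trail condition to show the chosen literal is true under~$f_w$, exactly as in the paper's argument. Your backward direction merely fleshes out the explicit construction that the paper leaves as ``similar,'' including the useful observation that distinct clauses use distinct $1$-edges because the superscript indices differ.
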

\begin{proof}
    $\Rightarrow$ Let~$w$ be a trail matching $R_1$.  
    Let us show that the valuation~$f_w$ associated with~$w$ makes~$I$ true.
    Let~$C_j=\beta \vee \gamma\vee \delta$ be a clause in~$I$. Then the walk~$w$ must contain the factor $\rhd C_j\xrightarrow{3}\alpha^{j-1}\xrightarrow{1}\alpha^{j}\xrightarrow{3}C_j\rhd$ for some~$\alpha\in\{\beta,\gamma,\delta\}$.
    Let us assume that $\alpha$ is a positive literal~$x_i$, which means that $w$ contains 
    $C_j\xrightarrow{3} x_i^{j-1} \xrightarrow{1} x_i^{j} \xrightarrow{3} C_{j+1}$ as a factor.
    Since~$w$ is a trail, it means that $w$ necessarily traverses the negative side of gadget $x_i$, hence that $f_w(x_i)=\top$. Hence $f_w$ indeeds makes clause $C_j$ true. The case where~$\alpha$ is a negative literal is symmetric.
    We just showed that $f_w$ makes true every clause of~$I$, hence~$I$ is satisfiable.
    
    The $\Leftarrow$ direction is similar.  The valuation making~$I$ true gives the side used in each variable gadget and since it must make~$I$ true, it also provides implicitly which variable(s) makes each clause true.
\end{proof}

Recall that  $R_2=\Rii$.  
Let us now precisely characterises the walks matching~$R_2$.  

\begin{lemma}\label{l:match-r2}
    The set of walks in~$\Gc_I$ matching $R_2$ is exactly
    \begin{equation*}
        \{w_{a,j} \mid \alpha\in\{x_1,\ldots,x_k\}\cup\{{\bar x}_1,\ldots,{\bar x}_k\},~ j\in\{1,\ldots,\ell\} \}
    \end{equation*}
    where~$w_{\alpha,j}$ is the following walk.
    \begin{multline*}
        Source \xrightarrow{0} Start_{x_1}\xrightarrow{2} \cdots \xrightarrow{2} Start_\alpha \xrightarrow{5} \\
            \rhd S\alpha \xrightarrow{5} \alpha^{0} \xrightarrow{1} \cdots \xrightarrow{1} \alpha^{j-1} \xrightarrow{1} \alpha^{j} \xrightarrow{4} \alpha^{j-1} \xrightarrow{1} \alpha^{j} \xrightarrow{5} S\alpha\rhd \xrightarrow{5} \\
            End_\alpha \xrightarrow{2} \cdots \xrightarrow{2} End_{\bar{x}_1} \xrightarrow{0} Target
    \end{multline*}
\end{lemma}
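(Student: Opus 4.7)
The plan is to process the nine factors of $R_2 = 0 \cdot 2^* \cdot 55 \cdot 1^* \cdot 4 \cdot 1^* \cdot 55 \cdot 2^* \cdot 0$ from left to right and, at each factor, show that the candidate edges in $\Gc_I$ are essentially forced, leaving only a literal $\alpha$ and an index $j$ as the free parameters. Only two edges in the whole graph carry the label $0$, namely $Source \xrightarrow{0} start_{x_1}$ and $end_{\bar{x}_k} \xrightarrow{0} Target$; the opening $0$ of $R_2$ must be the former because $Target$ has no outgoing $2$-edge, and a symmetric argument forces the closing $0$ to be the latter. Hence any matching walk begins at $Source$ and ends at $Target$ after visiting $end_{\bar{x}_k}$.

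For the first $2^*$, the chain of $2$-edges starting at $start_{x_1}$ goes $start_{x_1} \to start_{\bar{x}_1} \to \cdots \to start_{\bar{x}_k} \to x_0\rhd$ inside the start gadget. Since $x_0\rhd$ has no outgoing $5$-edge, the $55$ factor forces $2^*$ to stop at some $start_\alpha$ with $\alpha \in \{x_1, \bar{x}_1, \ldots, x_k, \bar{x}_k\}$. From each $start_\alpha$ the only pair of consecutive $5$-edges in $\Gc_I$ is $start_\alpha \xrightarrow{5} \rhd S\alpha \xrightarrow{5} \alpha^0$, so the walk enters the $\alpha$-side of the corresponding variable gadget at $\alpha^0$.

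Inside the variable gadget, the relevant outgoing edges from a vertex $\alpha^m$ are the forward $1$-edge to $\alpha^{m+1}$ and (when $m \geq 1$) the backward $4$-edge to $\alpha^{m-1}$. If the first $1^*$ takes $j$ forward steps we reach $\alpha^j$; using the $4$-edge requires $j \geq 1$ and brings us to $\alpha^{j-1}$; the second $1^*$ then places us at some $\alpha^m$. The closing $55$ pins $m = \ell$ because the unique pair of consecutive $5$-edges leaving an $\alpha^m$ vertex is $\alpha^\ell \xrightarrow{5} S\alpha\rhd \xrightarrow{5} end_\alpha$, which in particular enforces $j \leq \ell$. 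The closing $2^*$ then traverses the end gadget from $end_\alpha$ to $end_{\bar{x}_k}$ along the only $2$-edges available, readying the walk for the final $0$.

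Taken together, these constraints single out exactly one walk per pair $(\alpha, j)$ with $\alpha \in \{x_1, \bar{x}_1, \ldots, x_k, \bar{x}_k\}$ and $j \in \{1, \ldots, \ell\}$, and this walk is precisely $w_{\alpha, j}$ as stated. The converse inclusion---that each $w_{\alpha, j}$ indeed matches $R_2$---is a direct read-off of the labels along the walk. The only delicate point of the argument is to exhaustively enumerate the consecutive pairs of $5$-edges available in $\Gc_I$ when handling the two $55$ factors; once that list is laid out, no other branching is possible.
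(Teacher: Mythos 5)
Your proof is correct and takes essentially the same route as the paper's (which is only a sketch): the mandatory $0$, $55$ and $4$ factors, carried by labels that occur in very few places in $\Gc_I$, force the shape of any match, leaving only the literal $\alpha$ and the index $j$ free. Your left-to-right factor analysis merely spells out the ``simple verification'' the paper leaves implicit, and correctly identifies that the second $1^*$ must continue up to $\alpha^{\ell}$ before the closing $55$.
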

\begin{proof}[Sketch of proof]
    The fact that all such walks are matching~$R_2$ requires a simple verification.
    Conversely, each walk matching~$R_2$ is necessarily of this form because of the mandatory $4$ and $55$ in $R_2$. Indeed, $55$ forces the walk to choose a variable gadget and the positive or negative side of that gadget.
    Then the mandatory letter~$4$ forces the matching walk to repeat an edge within that gadget, the repeated edge fixes~$j$.
\end{proof}

Note that the walks~$w_{\alpha,j}$ is almost a trail: it repeats the edge $\alpha^{j-1} \xrightarrow{1} \alpha^{j}$ twice but every other edge appears once.
Moreover, it follows from Lemma~\thelemma{} that they satisfy the following properties.

\begin{corollary}\label{c:match-r2}
    There are $2\times \ell \times k$ walks in~$\Gc_I$ matching~$R_2$.
    They all belong to $\minmultsem(\Gc_I,R_2)$.
    They all have a length smaller than $4k+\ell+6$.
\end{corollary}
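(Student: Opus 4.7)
I would prove the three claims of the corollary in order. For the count, Lemma~\ref{l:match-r2} parametrizes the matches by pairs $(\alpha, j)$ with $\alpha$ ranging over the $2k$ literals and $j \in \{1, \ldots, \ell\}$; distinct pairs yield distinct walks (they differ on a $4$-edge or on a side-entering $5$-edge, as detailed below), so there are exactly $2k\ell$ matches.

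For the length bound, I would simply count the edges of a generic $w_{\alpha,j}$. A check of the start and end gadgets shows that the prefix from $Source$ to $Start_\alpha$ together with the suffix from $End_\alpha$ to $Target$ contains exactly $2k + 1$ edges, independently of the choice of $\alpha$ (a short case analysis on whether $\alpha$ is a positive or negative literal). The middle segment, running through gadget~$\alpha$, consists of the two entering $5$-edges, the $j$ forward $1$-edges to $\alpha^j$, the backtrack $4$-edge together with its return $1$-edge, the $\ell - j$ remaining forward $1$-edges up to $\alpha^\ell$, and the two exiting $5$-edges, totalling $\ell + 6$ edges. Hence $\len(w_{\alpha, j}) = 2k + \ell + 7$, which is strictly less than $4k + \ell + 6$ as soon as $k \geq 1$.

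The minimality is the part that actually needs an argument. By Lemma~\ref{l:match-r2}, every walk in $\matches(\Gc_I, R_2)$ has the form $w_{\beta, j'}$ with source $Source$ and target $Target$, so it suffices to show that the family $\{w_{\alpha, j}\}$ is pairwise incomparable under $\preceq$. I would split into two cases. If $\alpha \neq \beta$, the walk $w_{\beta, j'}$ contains the edge $\rhd S\beta \xrightarrow{5} \beta^0$, which enters a side of a variable gadget not visited by $w_{\alpha, j}$; consequently this edge does not occur in $w_{\alpha, j}$ at all, so $\edges(w_{\beta, j'}) \not\subseteq \edges(w_{\alpha, j})$, and by symmetry neither inclusion holds. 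If $\alpha = \beta$ but $j \neq j'$, the two walks share the same entering and exiting $5$-edges and the same underlying $1$-path on the $\alpha$-side (up to multiplicities), but each uses a unique $4$-edge: $\alpha^j \xrightarrow{4} \alpha^{j-1}$ appears in $w_{\alpha, j}$ but not in $w_{\alpha, j'}$, and vice versa. The multisets are therefore incomparable.

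The main obstacle, if there is one, is purely bookkeeping: one has to confirm that each distinguishing edge really has multiplicity at least one on one side and zero on the other. This is immediate from the explicit form of the walks in Lemma~\ref{l:match-r2} once one observes that each $4$-edge $\alpha^j \xrightarrow{4} \alpha^{j-1}$ occurs in exactly one walk of the family, and that each side-entering $5$-edge $\rhd S\alpha \xrightarrow{5} \alpha^0$ occurs only in those walks parametrized by that specific literal $\alpha$.
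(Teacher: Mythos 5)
Your proof is correct and follows essentially the same route as the paper, which derives all three claims directly from the explicit description of the walks $w_{\alpha,j}$ in Lemma~\ref{l:match-r2} (the paper leaves these verifications implicit, noting only that each $w_{\alpha,j}$ repeats exactly one edge). For the minimality claim, a one-line alternative to your case analysis is available: all the $w_{\alpha,j}$ have the same length $2k+\ell+7$, and $\edges(w)\subsetneq\edges(w')$ forces $\len(w)<\len(w')$, so no two of them can be comparable under $\prec$.
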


\begin{proposition}\label{p:red-edge-inclusion}
    For every walk~$w_1$ matching $R_1$, the following are equivalent.
    \begin{subthm}
        \item $w_1$ is not a trail;
        \item there exists a walk $w_2$ matching $R_2$ such that
        $\rededges(w_2)\subseteq \rededges(w_1)$,
    where $\rededges$ maps a walk~$w$ to the multiset of the red edges in~$w$, that is the edges labeled by 0, 1, 2 or 3.
    \end{subthm}
\end{proposition}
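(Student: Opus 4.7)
The plan is a two-direction argument, and the direction $(b)\Rightarrow(a)$ is immediate from Lemma~\ref{l:match-r2}: every walk matching $R_2$ is some $w_{\alpha,j}$, whose definition contains the red edge $\alpha^{j-1}\xrightarrow{1}\alpha^j$ with multiplicity~$2$, so $\rededges(w_2)\subseteq\rededges(w_1)$ forces this edge to occur at least twice in $w_1$, whence $w_1$ is not a trail.

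The substantive direction is $(a)\Rightarrow(b)$, and its first step is to classify which red edges of $\Gc_I$ may actually be repeated in a walk matching $R_1$. Using Lemma~\ref{l:match-r1} and a careful reading of the gadgets, I would check that: the two label-$0$ edges are unique in $\Gc_I$ and forced by the shape of $R_1$; the label-$2$ edges form a linear backbone (inside the start gadget, glu between variable gadgets, glu between clause gadgets, and inside the end gadget), each used exactly once; the label-$3$ edges lie only inside clause gadgets and can only occur as the outer letters of a $3{\cdot}1{\cdot}3$ factor, which happens at most once per clause gadget because each clause gadget is entered and exited exactly once; and the glu label-$1$ edges together with the boundary 1-edges of the variable gadgets ($\rhd x_i\xrightarrow{1}x_i^0$, $x_i^\ell\xrightarrow{1}x_i\rhd$ and their negative analogues) are used exactly once by the forced main traversal. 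The only red edges that can therefore be repeated are the inner 1-edges $\alpha^{j-1}\xrightarrow{1}\alpha^j$ with $\alpha\in\{x_i,\bar{x}_i\}$ and $j\in\{1,\ldots,\ell\}$, and such an edge is used at most twice: once if the main traversal of gadget $x_i$ follows the $\alpha$-side, and once if the $3{\cdot}1{\cdot}3$ factor inside clause gadget $C_j$ picks $\alpha$ as its literal.

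Assuming $w_1$ is not a trail, I would then pick $(\alpha,j)$ so that the edge $\alpha^{j-1}\xrightarrow{1}\alpha^j$ is repeated in $w_1$; by the previous step, both conditions above hold. Setting $w_2:=w_{\alpha,j}$ from Lemma~\ref{l:match-r2}, I would check the bag inclusion edge by edge: the red 0- and 2-edges of $w_2$ all sit on the forced backbone from $Source$ to $Start_\alpha$ and from $End_\alpha$ to $Target$, each appearing once in $w_1$; the edges $\alpha^0\xrightarrow{1}\alpha^1,\ldots,\alpha^{j-2}\xrightarrow{1}\alpha^{j-1}$ appear once in $w_2$ and at least once in $w_1$'s main traversal on the $\alpha$-side; and the edge $\alpha^{j-1}\xrightarrow{1}\alpha^j$ has multiplicity~$2$ in both $w_2$ and $w_1$ by the very choice of $(\alpha,j)$. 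This yields $\rededges(w_2)\subseteq\rededges(w_1)$.

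The main obstacle I expect is the classification step: establishing rigorously that the trajectory of a walk matching $R_1$ is completely forced outside of the two degrees of freedom (the main-side choice in each variable gadget and the literal choice in each clause gadget). This amounts to inspecting the orientation of every red edge in the gadgets and ruling out any local ``detour'' that the $(1+2+313)^*$ pattern might in principle allow. Once this classification is cleanly phrased, the identification of $(\alpha,j)$ and the inclusion check are essentially bookkeeping.
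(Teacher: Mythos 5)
Your proposal is correct and follows essentially the same route as the paper: identify the repeated edge as necessarily an inner $1$-edge $\alpha^{j-1}\xrightarrow{1}\alpha^{j}$, deduce that $w_1$ must traverse the $\alpha$-side of the corresponding variable gadget, take $w_2=w_{\alpha,j}$ from Lemma~\ref{l:match-r2}, and check the bag inclusion; the converse direction is the same one-line observation in both. The only difference is that you make explicit the classification of which red edges can be repeated, which the paper asserts without detailed justification.
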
     
\begin{proof}
    $\enumstyle{a}\implies\enumstyle{b}$
    If~$w_1$ is not a trail, $w_1$ uses twice some red edge $\alpha^{j-1}\xrightarrow{1}\alpha^{j}$ with $\alpha\in\{x_i,\bar{x}_i\}$ for some variable~$x_i$.
    Then the walk $w_2=w_{\alpha,j}$ (see Lemma~\ref{l:match-r2}) uses that red edge twice. 
    Note that the walk~$w_1$ contains each red edges from the start and end gadgets.
    Since~$w_{\alpha,j}$ traverses only three gadgets, it remains to show that all red edges used by $w_{\alpha,j}$ in the gadget $x_i$ are also in~$w_1$.
    The only way for $w$ to use~$\alpha^{j-1}\xrightarrow{1}\alpha^{j}$ twice is that
    $w_1$ contains the factor $\alpha^0\xrightarrow{1}\cdots\xrightarrow{1} \alpha^j$ (while traversing gadget~$x_i$, see item~\ref{l:match-r1:side}).
    
    $\enumstyle{b}\implies\enumstyle{a}$
    Every match to~$R_2$ repeats a red edge, the hypothesis $\rededges(w_2)\subseteq \rededges(w_1)$ implies that~$w_1$ repeats that same red edge, hence is not a trail.
\end{proof}

We now consider the expression $R_3=\Riii$. A simple verification yields the following characterization of the walks (in fact the unique walk) that matches $R_3$.

\begin{lemma}\label{l:match-r3}
    There is exactly one walk~$w_3$ matching~$R_3$, and it satisfies the following properties
    \begin{subthm}
        \item \label{l:match-r3:endpoints} $w_3$ starts and ends in $Target$.
        \item \label{l:match-r3:one-copy} $w_3$ contains one or two copies of each blue or green edge, that is of each edge labeled over \{$4$, $5$, $6$, $7$, $8$, $9$, $X$\}.
    \end{subthm}
\end{lemma}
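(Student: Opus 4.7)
The plan is a direct verification based on the fact that $\Gc_I$ offers essentially no choice in how to match $R_3$.

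For item~\ref{l:match-r3:endpoints}, I would note that the label $9$ appears on exactly one edge of $\Gc_I$ (from $Target$ to $Sx_0\rhd$, see Figure~\ref{fig:gadget-end}) and the label $X$ appears on exactly one edge (from $\rhd S\bar x_{k+1}$ to $Target$). Since $R_3$ begins with $9$ and ends with $X$, every match to $R_3$ must start and end at $Target$.

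For uniqueness and item~\ref{l:match-r3:one-copy}, I would show that at every vertex reached by the matching process, the outgoing edges labeled over $\{4,5,6,7,8\}$ leave at most one of the four loop options $8$, $755$, $646$, $557$ viable. A case analysis based on Figures~\ref{fig:gadget-start}--\ref{fig:gadget-glu} yields: at the transitional glu vertices, only $8$ is viable (the only outgoing edge in the subalphabet); at each $\rhd S\alpha$, only $755$ is viable (the candidate $557$ stalls at $\alpha^0$, which has no outgoing $5$-edge); at each $\alpha^j$ with $0\le j < \ell$, only $646$ is viable; at each $\alpha^\ell$, only $557$ is viable; and at each $S\alpha\rhd$, only $8$ is viable (since $755$ and $646$ require nonexistent outgoing $7$- or $6$-edges, and $557$ stalls at $end_\alpha$, which has no outgoing $5$-edge). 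Stringing together these forced choices starting from $Sx_0\rhd$ yields the unique walk $w_3$: for each of the $2k$ variable-gadget sides in order, the walk takes an $8$-edge to enter, then $755$, then $\ell$ consecutive $646$s across the variable gadget, then $557$ to exit; extra $8$-edges of the glu gadget bridge the positive and negative halves of the $S$-chain, and finally $X$ returns to $Target$.

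Item~\ref{l:match-r3:one-copy} then follows from a direct edge tally on this explicit walk: the $9$-edge, the $X$-edge, and each $8$-, $4$-, $5$-, and $7$-edge is used exactly once, while each $6$-edge is used exactly twice since it is the repeated edge in a $646$-block. The main subtlety is to ensure that the ``only viable'' claims are exhaustive, i.e.\ to rule out not merely immediate dead-ends but also longer detours that might eventually complete the match; this is controlled by the observation that the critical pinch vertices $\alpha^0$, $\alpha^\ell$, and $end_\alpha$ each have exactly one outgoing edge in the relevant subalphabet, which makes the matching process fully deterministic at each step.
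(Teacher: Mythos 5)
Your proof is correct and takes essentially the same approach as the paper, which dismisses this lemma with ``a simple verification yields the following characterization'': your case analysis is exactly that verification carried out in full. The forced-choice argument (at each reached vertex at most one of the four star blocks $8$, $755$, $646$, $557$ is locally completable, and the unique $X$-labeled edge pins down where the star must terminate) correctly yields both the uniqueness of $w_3$ and the stated edge multiplicities.
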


In particular, note that for every match~$w_1$ to~$R_1$, the walk $w_1\cdot w_3$ matches~$R_1R_3$, where~$w_3$ is the match to~$R_3$.

\begin{proposition}\label{p:equiv}
    Let~$w_1$ be a match to $R_1$
    and $w_3$ be the match to $R_3$. The following are equivalent.
    \begin{subthm}
        \item $w_1$ is not a trail;
        \item There exists a walk $w_2$ matching $R_2$ such that
        $\edges(w_2)\subsetneq \edges(w_1\cdot w_3)$;
        \item $w_1\cdot w_3\notin \minmultsem(D,R_1\cdot R_3+R_2)$;
    \end{subthm}
\end{proposition}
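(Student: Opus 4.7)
The plan is to prove the cycle $(a) \Rightarrow (b) \Rightarrow (c) \Rightarrow (a)$, leveraging Proposition~\ref{p:red-edge-inclusion} for the red edges and Lemmas~\ref{l:match-r2} and~\ref{l:match-r3} to control the blue and green edges.

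For $(a) \Rightarrow (b)$, I would apply Proposition~\ref{p:red-edge-inclusion} to produce a walk $w_2$ matching $R_2$ with $\rededges(w_2) \subseteq \rededges(w_1)$. Since $w_3$ carries no red edges (Remark~\ref{rem:color}), this is equivalent to $\rededges(w_2) \subseteq \rededges(w_1 \cdot w_3)$. To handle the blue edges I would use the explicit form of $w_2 = w_{\alpha,j}$ from Lemma~\ref{l:match-r2} to observe that each blue edge appears at most once in $w_2$, together with Lemma~\ref{l:match-r3:one-copy} ensuring that every blue edge appears in $w_3$, hence in $w_1 \cdot w_3$. Combining the two yields $\edges(w_2) \subseteq \edges(w_1 \cdot w_3)$ as multisets, and the inclusion is strict because $w_3$ contains green edges (e.g.~its initial $9$) while $w_2$ has none.

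For $(b) \Rightarrow (c)$, the argument is immediate: $w_2$ is itself a match to $R_1 R_3 + R_2$ with the same source $Source$ and target $Target$ as $w_1 \cdot w_3$ (by Lemmas~\ref{l:match-r1}, \ref{l:match-r2} and \ref{l:match-r3:endpoints}), and $w_2 \prec w_1 \cdot w_3$ by hypothesis, so $w_1 \cdot w_3$ is not $\prec$-minimal in $\matches(\Gc_I, R_1 R_3 + R_2, Source, Target)$.

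For $(c) \Rightarrow (a)$, I would extract a match $w'$ to $R_1 R_3 + R_2$ with $\edges(w') \subsetneq \edges(w_1 \cdot w_3)$. Because every match to $R_1 R_3$ contains a green edge while no match to $R_2$ does, $w'$ belongs to exactly one of the two classes. If $w'$ matched $R_1 R_3$, then by Lemma~\ref{l:match-r3} it would factor as $w_1' \cdot w_3$ for some $w_1'$ matching $R_1$; but all $R_1$-matches share the same fixed length (Lemma~\ref{l:match-r1:length}), forcing $\rededges(w_1') = \rededges(w_1)$ and thus $\edges(w') = \edges(w_1 \cdot w_3)$, which contradicts strict inclusion. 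Hence $w'$ must match $R_2$, and restricting to red edges and invoking the ``(b) $\Rightarrow$ (a)'' direction of Proposition~\ref{p:red-edge-inclusion} yields that $w_1$ is not a trail. This last case split is the main obstacle, and its resolution crucially relies on the length invariance of $R_1$-matches together with the uniqueness of the $R_3$-match.
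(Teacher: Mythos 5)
Your proof is correct and follows essentially the same route as the paper's: Proposition~\ref{p:red-edge-inclusion} combined with Lemma~\ref{l:match-r3:one-copy} for the red/blue edge bookkeeping, the definition of $\minmultsem$ for (b)\,$\Rightarrow$\,(c), and the length-invariance of $R_1$- and $R_3$-matches to rule out a strictly smaller $R_1R_3$-match in the remaining implication. The only differences are cosmetic: you organise the argument as a cycle (a)\,$\Rightarrow$\,(b)\,$\Rightarrow$\,(c)\,$\Rightarrow$\,(a) and spell out the blue-edge multiplicities and endpoint checks that the paper leaves implicit.
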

\begin{proof} 
It follows from Lemma~\ref{l:match-r3:one-copy} and Proposition~\ref{p:red-edge-inclusion} that \enumstyle{a} and \enumstyle{b} are equivalent.
Definition of~$\minmultsem$ implies that $\enumstyle{b} \implies  \enumstyle{c}$. 

Finally, let us show $\enumstyle{c}\implies \enumstyle{b}$.
By definition of $\minmultsem$, $w_1\cdot w_3 \notin \minmultsem(D,R_1\cdot R_3+R_2)$
means that there exists a walk $w_2$ matching $\R$ such that $\edges(w_2)\subsetneq \edges(w_1\cdot w_3)$.
We assume, for the sake of contradiction that~$w_2$ matches $R_1\cdot R_3$.
Recall that all matches to~$R_1$ (resp.~$R_3$) have the same length from Lemma~\ref{l:match-r1:length} (resp.~from Lemma~\ref{l:match-r3}).
Hence, the length of~$w_2$ is equal to the length of~$w_1w_3$, a contradiction to the condition $\edges(w_2)\subsetneq \edges(w_1\cdot w_3)$.
Since~$w_2$ matches $\R$ but not $R_1R_3$, then it must match $R_2$, hence \enumstyle{b} holds. 
\end{proof}

We may now finalise the proof of Theorem~\ref{th:enum-hardness}.
From Corollary \ref{c:match-r2}, there are $2 \ell  k$ walks in~$\minmultsem(\Gc_I,R_2)$.
They must all belong to~$\minmultsem(\Gc_I,R)$ since it is impossible that $\edges(w)\subsetneq\edges(w')$ with $w$ matching $R_1R_3$ and $w'$ matching~$R_2$ ($w$ contains green edges, $w'$ does not).
Now, let us assume there is a polynomial-delay algorithm~$A$ to enumerate~$\minmultsem(\Gc_I,R)$.
Note that the following are equivalent.
\begin{itemize}
    \item[] The algorithm~$A$ outputs at least $(2 \ell  k+1)$-th walks.
    \item[] $\iff$ There exists a walk~$w\in\minmultsem(\Gc_I,R)$ that matches $R_1\cdot R_3$.
    \item[] $\iff$ There is a trail~$w$ matching $R_1$. (Proposition~\ref{p:equiv})
    \item[] $\iff$ $I$ is satisfiable. (Proposition~\ref{p:3sat<=>trail})
\end{itemize}
On the other hand, it follows from Lemma~\ref{l:match-r1:length}, Corollary~\ref{c:match-r2} and Lemma~\ref{l:match-r3} that all matches to~$R=\R$ have a length that is polynomial in~$k$ and~$\ell$.
Hence, $A$ takes a polynomial time to output $2 \ell  k+1$ walks or to stop
after outputting~$2 \ell  k$.
This provides a polynomial time algorithm for 3-SAT: build~$\Gc_I$ and run $A$ until it outputs $2 \ell  k+1$ walks or stops at~$2 \ell  k$;
in the first case accept $I$, in the second reject~$I$.

\section{Side results}
\label{s:side-result}

\subsection{Walk membership is co-NP hard}

\begin{problem}{Walk Membership under \mm}
    \item[Input:] A regular expression~$R$, a graph graph~$\Gc_I$ and a walk~$w$.
    \item[Question:] Does $w$ belong to $\minmultsem(R,D)$?
\end{problem}

\begin{theorem}\label{th:memb-conp-compl-mm}
    \problemfont{Walk membership} under $\minmultsem$ semantics is \conp-complete.
    It is already true for a fixed regular expression~$R$.
\end{theorem}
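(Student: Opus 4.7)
First I would show that the complement problem is in \np. Given an instance $(R,\Gc,w)$, the condition $w\notin\mm(\Gc,R)$ holds exactly when either $w$ fails to match~$R$, or there exists another match $w'\in\matches(\Gc,R)$ with $\edges(w')\subsetneq\edges(w)$. Checking whether $w$ matches~$R$ is polynomial, via a product-automaton construction on~$\Gc$ and~$R$ (as in Lemma~\ref{lem:emptyness}). For the second disjunct, strict sub-multiset inclusion $\edges(w')\subsetneq\edges(w)$ forces $|w'|<|w|$ (the total count of edges drops by at least one), so any witness~$w'$ is polynomial-size in the input and can be verified in polynomial time. Hence $\mm$-membership lies in \conp.

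\textbf{\conp-hardness.}
For the lower bound I would reduce from $3$-UNSAT. Given a 3-SAT instance~$I$, the goal is to build a graph~$\Gc_I$ and a distinguished walk~$w_I$, for a suitably chosen fixed regular expression~$R$, such that~$w_I$ matches~$R$ and such that a strictly smaller matching walk exists if and only if~$I$ is satisfiable. Then $w_I\in\mm(\Gc_I,R)$ holds precisely when~$I$ is unsatisfiable, which completes the reduction.

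\textbf{Proposed construction.}
The plan is to adapt the gadgets of Section~\ref{sec:db}. The walk~$w_I$ is chosen as a deliberately \emph{redundant} match which, in every variable gadget, traverses edges on \emph{both} the positive and negative sides. Any strictly smaller match~$w'$ is then constrained to use at most one side per variable gadget, and hence induces a candidate truth assignment~$\sigma$ of~$I$. The clause gadgets must be engineered so that~$w'$ matches~$R$ only when, for each clause~$C_j$, the walk visits a literal of~$C_j$ that is satisfied by~$\sigma$. With this design, strictly smaller matches of~$R$ are in one-to-one correspondence with satisfying assignments of~$I$.

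\textbf{Main obstacle.}
I expect the main difficulty to be making that correspondence tight in both directions: one must rule out \emph{spurious} smaller matches (which would force $w_I\notin\mm$ even when~$I$ is unsatisfiable), and one must also ensure that every satisfying assignment actually yields a walk whose edge multiset is strictly below~$\edges(w_I)$. Unlike in Theorem~\ref{th:enum-hardness}, where the ``easy'' $R_2$-matches act as uniform witnesses regardless of~$I$, here the witnesses must be entirely governed by the satisfiability of~$I$. Reusing the colored-edge idea of Remark~\ref{rem:color} to keep matches produced by different branches of~$R$ incomparable under~$\prec$ should help localize where smaller matches can arise, so that ``smaller than $w_I$'' faithfully encodes ``satisfying assignment of~$I$''.
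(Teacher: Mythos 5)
Your proposal follows the same route as the paper in both directions: membership in \conp{} via a polynomial-size witness $w'$ with $\edges(w')\subsetneq\edges(w)$, and \conp-hardness by reducing from 3-\textsc{Unsat} with a deliberately redundant distinguished match $w_I$ that covers both sides of every variable gadget (and all clause edges), so that a strictly $\prec$-smaller match exists exactly when a trail matching $R_1$, i.e.\ a satisfying assignment, exists. The step you leave ``to be engineered'' is exactly what the paper's sketch supplies --- green shortcut edges such as $x_i\rhd\xrightarrow{G}\rhd x_i$ and the branch $R_2=0\cdot (11^*G11^*+2+(3G)^53)^*\cdot 0$, whose matches all have equal length (ruling out spurious witnesses within that branch) and one of which uses every red edge exactly once --- but otherwise the approach is essentially identical.
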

\begin{proof}[Sketch of proof]
    \problemfont{Walk Membership}$(\Gc,R,w)$ under $\minmultsem$ belongs to \conp since a certificate for rejection is any walk~$w'$ in $\walks(\Gc,R)$ such that $\mm(w')\subsetneq \mm(w)$.
    
    The proof of \conp-hardness is similar, but simpler than the one of Theorem~\ref{th:enum-hardness}.
    We use same red part of the graph and the same expression~$R_1=\Ri$.
    Hence, Proposition~\ref{p:3sat<=>trail} still holds: each walk matching $R_1$ is a trail if and only if the 3-SAT instance is satisfiable .
    Then we add green edges to the graph. This time, we need to devise an expression~$R_2$ that matches a walk~$w_2$ such that every red edge occurs exactly once in~$w_2$ (green edges may occur any number of times).
    This may for instance be done as follows.
    \begin{itemize}
        \item We add a single green letter~$G$ to the alphabet.
        \item We add an edge $x_i \rhd \xrightarrow{G} \rhd x_i$ for every~$j\in\{1,\ldots,k\}$.
        \item We add an edge $C_j \rhd \xrightarrow{G} \rhd C_j$ for every~$j\in\{1,,\ldots,\ell\}$.
        \item An edge $\alpha^{j-1} \rhd \xrightarrow{G} \alpha^{j}$ for every~$j\in\{1,\ell\}$ and every literal~$\alpha$ in clause $C_j$.
        \item Using the expression~$R_2=0\cdot (11^*G11^*+2+(3G)^53)^*\cdot 0$.
    \end{itemize}
    The pattern $11^*G11^*$ inside the star of~$R_2$ allows some walks matching~$R_2$ to have as factors both the positive and the negative 1-walk of each variable gadget (see item \ref{l:match-r1:side}).
    Similarly, the pattern $(3G)^53$ allows some walks matching~$R_2$ to pass through every 3-edge in each clause gadget.
    In particular, there is a match~$w_2$ to~$R_2$ that contains every red edge.
    Note also that all walks matching~$R_2$ have the same length, hence it is impossible that a walks~$w_2'$ matching $R_2$ satisfies $\edges(w_2)\subsetneq\edges(w_2')$.
    Then~$w$ belongs to $\minmultsem(\Gc,R_1+R_2)$ if and only if $I$ is not satisfiable.
\end{proof}

\subsection{Minimal-Set Semantics}
\newcommand{\smsle}{\sqsubset}
\newcommand{\smsleq}{\sqsubseteq}

As mentioned in the introduction, the problem studied in this note were introduced in \cite{MarsaultMeyer2024}.
Other variants were also mentioned, and we show how the proof can easily be adapted for some of them.

\begin{definition}
    \begin{subthm}
        \item \label{def:minimal-multiset}
        For every walk~$w$, we let~$\edgeset(w)$ denote the \textbf{set} of the edges appearing in~$w$.
        \item 
        We define the strict partial order relation~$\smsle$ on walks as follows:~$w\smsle w'$ if and only if: either $\edgeset(w)\subsetneq \edgeset(w')$;  or $\edgeset(w) = \edgeset(w')$ and $\len(w)<\len(w')$.
        We let~$\smsleq$ denote the reflexive extension of~$\smsle$.
        \item We let~$\sms$ denote the function that maps a graph~$\Gc$, a regular expression~$R$,
        and two vertices vertex~$s,t$ to:
        \begin{equation}
            \sms(\Gc,R,s,t) = \min_{\smsle} \matches(\Gc,R,s,t)\quad.
        \end{equation}
    \end{subthm}
\end{definition}

In Item \enumstyle{b} above, we use length as a tie-breaker when two walks have equal edge sets.  
Indeed, recall that without such a tie breaker, \sms{} would not always return a finite set of walks.
It may be verified that with the current definition, $\sms(\Gc,R,s,t)$ is indeed always finite.

\begin{problem}{Walk Enum Under \sms}
  \item[Input:] A regular expression~$R$, a graph~$\Gc$ and two vertices~$s,t$.
  \item[Output:] All walks in $\sms(R,\Gc,s,t)$.
\end{problem}

\begin{theorem}\label{th:enum-hardness-sms}
        Unless $\ptime=\np$, there exists no polynomial delay algorithm to solve \problemfont{Walk Enum Under \sms}. It is already true for a fixed regular expression~$R$ with star-height~$1$.
\end{theorem}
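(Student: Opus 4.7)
The plan is to reduce 3-SAT to \problemfont{Walk Enum Under \sms} along the same lines as Theorem~\ref{th:enum-hardness}. The red subgraph of~$\Gc_I$ and the expression~$R_1$ are reused unchanged, so Proposition~\ref{p:3sat<=>trail} continues to give ``trail iff $I$ satisfiable''. The adaptation concerns the easy walks: the direct reuse of the $w_{\alpha,j}$ of Lemma~\ref{l:match-r2} fails under~$\sms$, because their edge \emph{sets} are included in $\edgeset(w_1\cdot w_3)$ for every $R_1$-match~$w_1$ as soon as $w_1$ traverses side~$\alpha$ of gadget~$x_i$, which happens systematically; so the $w_{\alpha,j}$'s would collapse $\sms(\Gc_I,R)$ to the easy walks alone regardless of~$I$.

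The observation that unlocks the reduction is that every non-trail~$w_1$ contains a ``reuse triple'' $(C_j,\alpha,S)$ with $\alpha\in C_j$, $\alpha$'s variable some~$x_i$, and $S=S_i$ matching $\alpha$'s side, whereas no trail does. For such a triple, the red edges $\{\rhd C_j\to\alpha^{j-1},\,\alpha^{j-1}\to\alpha^j,\,\alpha^j\to C_j\rhd\}$ together with the side-$S$ $1$-edges of gadget~$x_i$ are contained in every non-trail with that triple and in no trail: a trail with $S_i=S$ must pick a literal other than~$\alpha$ (else it would reuse), and a trail with the opposite side of~$x_i$ misses the side-$S$ $1$-edges. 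I would therefore introduce, for each of the $O(\ell)$ reuse triples, a short easy walk~$w'_{C_j,\alpha,S}$ using exactly those red edges. To realise each $w'$ as a walk from $Source$ to $Target$, I would augment~$\Gc_I$ with ``teleport'' edges under a fresh letter~$T$: $Source\xrightarrow{T}\rhd x_i$, $x_i\rhd\xrightarrow{T}\rhd C_j^{(\alpha)}$, and $C_j\rhd\xrightarrow{T}Target$, where $\rhd C_j^{(\alpha)}$ is a per-literal copy of $\rhd C_j$ whose only outgoing $3$-edge goes to~$\alpha^{j-1}$; this duplication forces the teleport destination to agree with the literal used in the detour. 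Then the fixed expression $R_2'=T\cdot 1^+\cdot T\cdot 3\cdot 1\cdot 3\cdot T$ of star-height one matches exactly the walks~$w'_{C_j,\alpha,S}$.

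To preserve $\edgeset(w'_{C_j,\alpha,S})\subseteq\edgeset(w_1\cdot w_3')$ in the non-trail case, I would also modify $R_3$ to~$R_3'$ so that $w_3'$ covers every newly added edge (teleports and the new $3$-edges out of duplicated vertices), typically by pairing each new edge with a reverse edge and threading a round-trip through an additional clause inside the Kleene star of~$R_3'$, keeping star-height one. The analogue of Proposition~\ref{p:equiv} then reads: $w_1\cdot w_3'\in\sms(\Gc_I,R_1 R_3'+R_2')$ iff $w_1$ is a trail, and the rest of the argument mirrors Section~\ref{sec:real-proof}: $\sms(\Gc_I,R)$ contains the polynomial family of~$w'$'s, plus one or more trail $R_1 R_3'$-matches iff $I$ is satisfiable, so a polynomial-delay enumeration would decide 3-SAT.

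The main obstacle is the joint design of the graph modifications and~$R_3'$: it must simultaneously be the case that $w_3'$ covers every new edge, that $R_3'$ stays fixed and of star-height one, and that no spurious $R_2'$-match has its edge set included in some trail's edge set (which would invalidate the counting argument). The subtle point is that a regular expression cannot itself enforce the teleport-destination-to-literal alignment; it must follow entirely from the per-literal duplication of~$\rhd C_j$ and the one-outgoing-edge constraint at each~$\rhd C_j^{(\alpha)}$, and one has to verify carefully that this shape-based enforcement produces exactly the intended $O(\ell)$-sized family of easy walks and nothing more.
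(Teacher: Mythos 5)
Your diagnosis of why the $\mm$ construction breaks under $\sms$ is exactly right, and your overall strategy (keep ``non-trail'' as the bad event, and build one easy walk per potential reuse whose edge \emph{set} sits inside precisely the non-trails exhibiting that reuse) is a genuinely different route from the paper's. However, the construction as you describe it has a concrete, fatal gap, and it is precisely the one you defer to ``careful verification'': the teleport $x_i\rhd\xrightarrow{T}\rhd C_j^{(\alpha)}$ leaves from a vertex that is reached by \emph{both} sides of gadget~$x_i$, so $R_2'=T\cdot 1^+\cdot T\cdot 3\cdot 1\cdot 3\cdot T$ also matches the side-mismatched walk that runs $Source\xrightarrow{T}\rhd x_i$, takes $1^+$ down side $\bar\alpha$ to $x_i\rhd$, teleports to $\rhd C_j^{(\alpha)}$, and detours through $\alpha^{j-1}\xrightarrow{1}\alpha^j\xrightarrow{3}C_j\rhd$. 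Its red edge set is (side-$\bar\alpha$ $1$-edges) $\cup\,\{\alpha^{j-1}\to\alpha^j,\ \alpha^j\to C_j\rhd\}$, and once $w_3'$ covers the teleports and the new $3$-edge, this set is contained in $\edgeset(w_1\cdot w_3')$ for every $R_1$-match $w_1$ that traverses side $\bar\alpha$ of $x_i$ and detours $C_j$ through $\alpha$ --- which is exactly the signature of a \emph{good} trail (a trail must always detour each clause through the side it does not traverse). Since every trail does this for every clause, every trail is strictly dominated by some spurious $R_2'$-match, $\sms(\Gc_I,R)$ collapses to the easy walks regardless of~$I$, and the reduction fails. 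Fixing this requires the teleport out of the variable gadget to remember which side was taken (e.g., leaving from $\alpha^\ell$ rather than from $x_i\rhd$, or duplicating the exit vertex per side), plus a re-verification that the now larger set of new edges is still coverable by a fixed, star-height-one $R_3'$; none of this is in the proposal, and the $R_3'$ covering walk is itself only gestured at (in particular $w_3'$ must now reach $Source$, which the original $w_3$ never visits).

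For contrast, the paper avoids teleports and duplication entirely by changing \emph{which} event the easy walks detect: it reroutes the blue back-edges across sides ($x_i^{j}\xrightarrow{4}\bar x_i^{j-1}$ and $\bar x_i^{j}\xrightarrow{4'}x_i^{j}$, with $R_2=0\cdot 2^*\cdot 55\cdot 1^*\cdot 414'\cdot 1^*\cdot 55\cdot 2^*$), so each easy walk uses one full side of a variable gadget plus exactly one $1$-edge of the opposite side. The bad event becomes ``$w_1$ contains the $j$-th $1$-edge of both sides of some gadget $x_i$'', which is visible at the level of edge sets, is localized inside a single variable gadget (so the $55$ entry pins down the side and no spurious match arises), and is re-tied to unsatisfiability by flipping the valuation convention in the analogue of Proposition~\ref{p:3sat<=>trail}. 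Your approach could likely be repaired along the lines above, but as written it does not establish the theorem.
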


The proof of Theorem \ref{th:enum-hardness-sms} is very similar to the one of theorem~\ref{th:enum-hardness} and we only sketch the difference below.
First, we need to modify the graph and expression as follows.
\begin{itemize}
    \item Add a new blue letter~$4'$.
    \item Replace each edge $x_i^{j} \xrightarrow{4} x_i^{j-1}$ by two edges $x_i^{j} \xrightarrow{4} \bar{x}_i^{j-1}$ and ${\bar x}_i^{j} \xrightarrow{4'} x_i^{j}$.
    \item Symmetrically replace each edge ${\bar x}_i^{j} \xrightarrow{4} {\bar x}_i^{j-1}$ by two edges ${\bar x}_i^{j} \xrightarrow{4} x_i^{j-1}$ and $x_i^{j} \xrightarrow{4'} {\bar x}_i^{j}$.
    \item $R_2$ becomes $0\cdot 2^* 55\cdot 1^* \cdot 414' \cdot 1^* \cdot 55 \cdot 2^*$
    \item $R_3$ becomes $9\cdot (8+755+6464'+557)^*\cdot X$
\end{itemize}
Second, a proof much similar to the one of Proposition~\ref{p:3sat<=>trail} shows that $I$ is satisfiable if and only if there is a walk matching~$R_1$ that does not contain both the edges $x_i^{j} \xrightarrow{1} x_i^{j-1}$ and $x_i^{j} \xrightarrow{1} x_i^{j-1}$.
The matches to $R_2$ are exactly the walks using one side (positive or negative) from a variable gadget, and one edge from the other side of this gadget. The remainder of the proof is similar to the one of Theorem~\ref{th:enum-hardness}.

\begin{remark}
    The proof of Theorem~\ref{th:enum-hardness-sms} does not actually use the fact that $\smsle$ 
    uses length a a tiebreak.
    Hence Theorem~\ref{th:enum-hardness-sms} would also hold for semantics based on any order~$\ll$ satisfying $\edgeset(w)\subsetneq \edgeset(w') \implies w \ll w'$.
\end{remark}

Finally, consider the problem \problemfont{Membership under \sms}, defined similarly to \problemfont{Membership under \mm}.
Although \problemfont{Membership under \sms} is clearly in \conp,
the constructions presented in this note does not seem adapted to show that it is \conp-hard.
However, we conjecture that it is.
\begin{conjecture}
    \problemfont{Membership Under \sms} is \conp-complete, for a fixed regular expression $R$.
\end{conjecture}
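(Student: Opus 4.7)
The plan is to reduce \textsc{Unsat} to \problemfont{Walk Membership under \sms}: given a 3-\textsc{Sat} instance~$I$, produce a graph~$\Gc_I$, a walk~$w_I$ of polynomial size, and a fixed expression~$R$ such that $w_I \in \sms(\Gc_I,R)$ iff $I$ is unsatisfiable. First, I would verify that \problemfont{Walk Membership under \sms} is in \conp: a certificate for $w \notin \sms(\Gc,R)$ is any walk~$w'$ matching~$R$ with $w' \smsle w$, and such a $w'$ admits a polynomial-size representative — either $\edgeset(w')\subsetneq\edgeset(w)$, in which case the shortest walk matching $R$ in the subgraph induced by~$\edgeset(w)\setminus\{e\}$ for some $e\in\edgeset(w)$ has length polynomial in the input (by the product construction between~$\Gc$ and an NFA for~$R$), or $\edgeset(w')=\edgeset(w)$ and $\len(w')<\len(w)$, in which case $|w'|$ is directly bounded by~$|w|$.

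For hardness I would reuse the modified graph of Theorem~\ref{th:enum-hardness-sms} and its expression~$R_1$, so that the \sms-adapted analogue of Proposition~\ref{p:3sat<=>trail} holds: a walk matching~$R_1$ whose edgeset avoids an entire side of some variable gadget exists iff~$I$ is satisfiable. In the spirit of the proof of Theorem~\ref{th:memb-conp-compl-mm} I would add green edges and design a companion expression~$R_2$ so that all matches to~$R_2$ share the same length and the same edgeset, namely the full set of red edges together with a canonical set of green filler edges; the walk $w_I$ is then any such $R_2$-match. By construction, no match to~$R_2$ is $\smsle$-strictly below~$w_I$, so the membership of~$w_I$ in $\sms(\Gc_I,R_1+R_2)$ is controlled entirely by matches to~$R_1$.

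For the satisfiable direction, the $R_1$-witness uses a single $1$-edge per variable gadget, hence has edgeset strictly included in~$\edgeset(w_I)$, giving $w_I\notin\sms(\Gc_I,R)$. The difficulty is the converse: a serious obstacle appears because by Lemma~\ref{l:match-r1:side}, \emph{every} match to~$R_1$ already uses only one side per variable gadget and only one literal-pair of $3$-edges per clause, so its edgeset is strictly included in~$\edgeset(w_I)$ regardless of whether $I$ is satisfiable. This is the main reason the naive transplant of Theorem~\ref{th:memb-conp-compl-mm} fails under \sms.

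To overcome this, my proposal is to enrich~$R_1$ so that it admits matches which traverse \emph{both} sides of each variable gadget (via additional side-switch labels in the style of the $4/4'$ edges of Theorem~\ref{th:enum-hardness-sms}), and to simultaneously tune the edge labels of~$\Gc_I$ and the expression~$R_2$ so that an $R_1$-match has edgeset strictly smaller than~$\edgeset(w_I)$ \emph{only} when it corresponds to a satisfying assignment. Equivalently, every $R_1$-match arising from an unsatisfiable instance should be forced to cover both sides of every variable gadget and every literal-pair of every clause. Designing such a joint construction with a fixed~$R$ of bounded star-height, while also ruling out spurious $\smsle$-witnesses obtained by mixing partial $R_1$- and $R_2$-factors, is what I expect to be the most delicate part of the argument.
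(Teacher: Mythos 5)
First, note that the paper does not prove this statement: it is stated explicitly as a conjecture, and the author writes just before it that ``the constructions presented in this note does not seem adapted to show that it is \conp-hard.'' So your proposal cannot be measured against a reference proof; it has to stand on its own, and it does not yet. Your \conp{} upper bound is fine and matches the paper's remark. Your diagnosis of the obstacle is also correct and is essentially the reason the author leaves this open: under set semantics, repeating an edge costs nothing, so \emph{every} match to~$R_1$ (trail or not) has $\edgeset(w_1)\subsetneq\edgeset(w_I)$ for any candidate ``full-cover'' walk~$w_I$, and the mechanism of Theorem~\ref{th:memb-conp-compl-mm} --- which hinges on a repeated edge raising the \emph{multiplicity} above that of~$w_2$ --- collapses.

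The gap is that your final paragraph does not resolve this obstacle; it restates it as a design goal. Saying that one should ``tune the edge labels and the expression so that an $R_1$-match has edgeset strictly smaller than $\edgeset(w_I)$ only when it corresponds to a satisfying assignment'' is precisely the conjecture, not a construction. The specific difficulty you would have to overcome is structural: with a \emph{fixed} expression~$R$, any $R_1$-match that makes a choice (of a side of a variable gadget, of a literal in a clause) omits the edges of the unchosen branch from its edge \emph{set}, so its edgeset is automatically incomparable to, or smaller than, that of a match making a different choice --- independently of satisfiability. The $4/4'$ side-switch trick of Theorem~\ref{th:enum-hardness-sms} does not help here because it manufactures a \emph{smaller}-edgeset witness when $I$ is satisfiable (the right direction for the enumeration lower bound), whereas for \conp-hardness of membership you need the distinguished walk~$w_I$ to be dominated exactly when $I$ is satisfiable and undominated otherwise, with no spurious dominators coming from the exponentially many non-satisfying $R_1$-matches. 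Until you exhibit a concrete graph, expression and walk with that property and verify both directions, the hardness half of the claim remains open, as the paper itself acknowledges.
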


\section{References}

\ifdraft
    \nocite{*}
\fi
\printbibliography[heading=none]

\end{document}